\documentclass[11pt]{article}
\usepackage{inconsolata}
\usepackage{libertine}
\usepackage[margin=1in]{geometry}
\usepackage[utf8]{inputenc}
\usepackage{authblk}
\usepackage{amsmath, amssymb, amsthm, thmtools, amsfonts, bm, bbm, thm-restate}
\usepackage{algorithm}
\usepackage{algorithmicx}
\usepackage[noend]{algpseudocode}
\usepackage{cite}
\usepackage[numbers,sort]{natbib}

\usepackage{asymptote}
\usepackage{graphicx}
\usepackage{todonotes}
\usepackage{multirow}
\usepackage{comment}
\usepackage{dsfont}
\usepackage{bigstrut}
\usepackage{caption}
\usepackage{subcaption}
\usepackage{multirow}
\usepackage{makecell}
\usepackage{booktabs}
\usepackage{xcolor}
\definecolor{BrickRed}{rgb}{0.8,0.25,0.33}
\usepackage[pagebackref,citecolor=blue]{hyperref}

\usepackage{enumitem}
\usepackage[normalem]{ulem}

\usepackage[framemethod=tikz]{mdframed}
\usepackage{nicefrac}

\usepackage{tikz}
\usetikzlibrary{positioning, fit, calc}

\usepackage{pifont}

\usepackage[capitalize]{cleveref}


\declaretheorem[numberwithin=section,refname={Theorem,Theorems},Refname={Theorem,Theorems}]{theorem}

\declaretheorem[numberlike=theorem]{lemma}

\declaretheorem[numberlike=theorem]{corollary}
\declaretheorem[numberlike=theorem]{definition}

\declaretheorem[numberlike=theorem]{Remark}
\declaretheorem[numberlike=theorem,refname={Fact,Facts},Refname={Fact,Facts},name={Fact}]{fact}

\declaretheorem[numberlike=theorem, refname={Observation,Observations},Refname={Observation,Observations},name={Observation}]{observation}



\allowdisplaybreaks

\newcommand{\E}{\mathbb{E}}

\newcommand{\eps}{\varepsilon}

\newcommand{\calC}{\mathcal{C}}
\newcommand{\calD}{\mathcal{D}}

\newcommand{\poly}{\mathrm{poly}}

\newcommand{\Ber}{\mathrm{Ber}}

\let\vec\mathbf
\renewcommand{\vec}{\mathbf}

\usepackage{xspace}
\newcommand{\good}{$\eps$-good\xspace}
\newcommand{\bad}{$\eps$-bad\xspace}
\newcommand{\stepsize}{\frac{2}{\sqrt{\varepsilon}\Delta}}
\newcommand{\varsize}{\frac{2}{\varepsilon}}

\title{Deterministic Online Bipartite Edge Coloring}

\author[1]{Joakim Blikstad\thanks{Supported by the Swedish Research Council (Reg. No. 2019-05622) and the Google PhD Fellowship Program.}}
\author[2]{Ola Svensson\thanks{Supported by the Swiss State Secretariat for Education, Research and Innovation (SERI) under contract number MB22.00054.}}
\author[2]{Radu Vintan\protect\footnotemark[\value{footnote}]}
\author[3]{David Wajc\thanks{Supported by a Taub Family Foundation ``Leader in Science and Technology'' fellowship, and by ISF grant 3200/24.}}

\affil[1]{KTH Royal Institute of Technology \& Max Planck Institute for Informatics, \href{blikstad@kth.se}{blikstad@kth.se}}
\affil[2]{EPFL, \href{ola.svensson@epfl.ch}{ola.svensson@epfl.ch}}
\affil[3]{EPFL, \href{radu.vintan@epfl.ch}{radu.vintan@epfl.ch}}
\affil[4]{Technion --- Israel Institute of Technology,  \href{david.wajc@gmail.com}{david.wajc@gmail.com}}
\affil[ ]{\textit {}}

\date{\vspace{-1.3cm}}

\begin{document}
\maketitle

\pagenumbering{gobble}
\begin{abstract}
    We study online bipartite edge coloring, with nodes on one side of the graph revealed sequentially.\\ The trivial greedy algorithm is $(2-o(1))$-competitive, which is optimal for graphs of low maximum degree, $\Delta=O(\log n)$ [BNMN IPL'92]. Numerous online edge-coloring algorithms outperforming the greedy algorithm in various settings were designed over the years (e.g., [AGKM~FOCS'03, BMM~SODA'10, CPW~FOCS'19, BGW~SODA'21, KLSST~STOC'22, BSVW~STOC'24]), all crucially relying on randomization. A commonly-held belief, first stated by [BNMN IPL'92], is that randomization is necessary to outperform greedy.

   \smallskip 
   
Surprisingly, we refute this belief, by presenting a \emph{deterministic} algorithm that beats greedy for sufficiently large $\Delta=\Omega(\log n)$, and in particular has competitive ratio $\frac{e}{e-1}+o(1)$ for all $\Delta=\omega(\log n)$. We obtain our result via a new and surprisingly simple \emph{randomized} algorithm that works against \emph{adaptive adversaries} (as opposed to oblivious adversaries assumed by prior work), which implies the existence of a similarly-competitive deterministic algorithm [BDBKTW STOC'90].
This is the first use of contention resolution schemes, which are randomized algorithms for randomized inputs, that yields a deterministic algorithm for deterministic settings.

\end{abstract}

\newpage
\pagenumbering{arabic}
\section{Introduction}

Consider a bipartite graph of maximum degree $\Delta$, with the nodes on one side revealed one after another.
An online algorithm must color arriving nodes' edges immediately and irrevocably on arrival, so that no two edges sharing an endpoint receive the same color.
(So, each color class is a matching in the graph.)
If the algorithm can color any such graph with $\alpha\Delta$ colors, then it is \emph{$\alpha$-competitive} with respect to the offline optimal solution, which requires only $\Delta$ colors, by K\"onig's line coloring theorem \cite{konig1916graphen}.

An early result of competitive analysis \cite{bar1992greedy} asserts that the naive $(2-1/\Delta)$-competitive greedy algorithm, which assigns each edge a lowest available color, is optimally competitive in the worst-case, specifically for bipartite graphs of low maximum degree $\Delta=O(\log n)$ under one-sided node arrivals \cite{cohen2019tight}. However, \cite{bar1992greedy} conjectured that better bounds are achievable for $\Delta=\omega(\log n)$, at least using randomization. To quote from their work:
\begin{quote}
\emph{... An interesting open problem is whether better bounds [than greedy's]
can be achieved for graphs whose maximal degree is larger [than $\log n$]. This seems less plausible in the
deterministic case, but perhaps one can devise a
randomized algorithm that would edge color a
graph better than the greedy algorithm for high
degree graphs.}
\end{quote}
And indeed, a long line of work \cite{aggarwal2003switch,bahmani2012online,cohen2019tight,bhattacharya2021online,saberi2021greedy,kulkarni2022online,naor2025online,blikstad2023simple,blikstad2024online} made progress on the \cite{bar1992greedy} conjecture, culminating in a $(1+o(1))$-competitive algorithm for general graphs with $\Delta=\omega(\log n)$ under edge arrivals \cite{blikstad2024online}.
Crucially, all above works' algorithms are randomized. 
Indeed, we recall that \cite{bar1992greedy} were skeptical that deterministic algorithms may outperform the greedy algorithm.

Generally, deterministic algorithms have large gaps compared to randomized algorithms for many online problems, most notably exponential gaps for $k$-server \cite{koutsoupias1995k,bansal2015polylogarithmic} and for caching specifically \cite{manasse1988competitive,fiat1991competitive}.
Indeed, deterministic online caching algorithms cannot provide better worst-case guarantees than trivial algorithms (see discussion in \cite[Chapter 24]{roughgarden2021beyond}).
This is also the case for the ``dual'' problem to ours, of online bipartite matching (packing a large matching, rather than covering the graph with few matchings) under one-sided node arrivals, randomization is necessary to beat the naive greedy algorithm \cite{karp1990optimal}. 
The above justifies the aforementioned belief that greedy is optimal among deterministic online edge-coloring algorithms.

We show that randomization is not necessary to outperform greedy in bipartite graphs under one-sided node arrivals, as studied in randomized settings by \cite{cohen2019tight}.

\begin{theorem}
\label{thm:main}
    There exists a deterministic $\big(\frac{e}{e-1}+o(1)\big)$-competitive online bipartite edge-coloring algorithm under one-sided node arrivals for bipartite graphs with known $\Delta=\omega(\log n)$.\footnote{Without knowing $\Delta$, a competitive ratio of $(\frac{e}{e-1}+o(1))$ is optimal for \emph{randomized} algorithms under such arrivals~\cite{cohen2019tight}.}
\end{theorem}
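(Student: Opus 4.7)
The plan is not to design a deterministic algorithm directly, but to follow the blueprint hinted at in the abstract. I will first construct a \emph{randomized} algorithm that is $(\frac{e}{e-1}+o(1))$-competitive against an \emph{adaptive online adversary}, i.e., one that may observe the algorithm's random bits before choosing the next arriving left-vertex. Invoking the classical theorem of Ben-David, Borodin, Karp, Tardos and Wigderson, which states that any $c$-competitive randomized algorithm against an adaptive adversary can be derandomized without loss in competitive ratio, then yields \Cref{thm:main}.

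For the randomized algorithm, I would use a palette of $C=(\frac{e}{e-1}+o(1))\Delta$ colors driven by a bipartite contention-resolution scheme (CRS). When a left-vertex $u$ arrives exposing edges $e_1,\dots,e_k$, I assemble a fractional assignment $x_{e_i,c}$ of each $e_i=(u,v_i)$ to each color $c$ currently unused at $v_i$, normalized to unit mass per edge. Since each $v_i$ has ever received at most $\Delta$ distinct colors, the marginals $\sum_{e} x_{e,c}\le 1$ hold for every color, so $x$ lies in the matching polytope of the auxiliary ``edges of $u$ vs.\ colors'' bipartite graph at $u$. A bipartite CRS then rounds $x$ into a proper integral coloring of $u$'s edges, guaranteeing that each edge $e_i$ receives each color $c$ with probability at least $(1-1/e)\,x_{e_i,c}$. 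Edges left uncolored by the CRS spill onto a small reserve palette; since $\Delta=\omega(\log n)$, concentration ensures the reserve has size $o(\Delta)$ and the total palette remains of size $(\frac{e}{e-1}+o(1))\Delta$.

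The crucial virtue of a CRS-based analysis is that the $(1-1/e)$ marginal is a \emph{pointwise} guarantee: it survives any conditioning on the remaining randomness and therefore against an adaptive adversary. Combined with the BDBKTW reduction, this yields a deterministic algorithm of the claimed competitive ratio. As the abstract emphasizes, this will be the first time a CRS --- a tool tailor-made for randomized inputs --- is harnessed to obtain a \emph{deterministic} algorithm on a \emph{deterministic} input, with BDBKTW as the bridge.

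The main obstacle I expect is that the CRS must behave correctly against an adaptive adversary throughout the entire online execution, not only within a single arrival batch. At each arrival $u$ the local matching polytope is known, so a CRS can be applied at that step; but the adversary will pick future vertices $u'$, and implicitly the sets of used colors at each right-vertex $v$, adaptively to the algorithm's coloring of $u$. One must therefore (i) pin down the right CRS variant whose $(1-1/e)$ marginal is robust to arbitrary adaptive conditioning, and (ii) show via a martingale-style aggregation, exploiting $\Delta=\omega(\log n)$, that the dependencies across arrivals do not inflate the reserve-palette overhead beyond $o(\Delta)$. This ``local-to-global'' composition is, I expect, where the technical heart of the argument lies.
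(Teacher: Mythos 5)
Your high-level blueprint matches the paper's: design a randomized algorithm competitive against an adaptive adversary, apply the \cite{ben1994power} (BDBKTW) reduction, and drive the per-step coloring with a CRS. However, the central technical claim in your sketch is incorrect, and the resulting gap is precisely where the paper's real work lies. You assert that with a palette of $(\frac{e}{e-1}+o(1))\Delta$ colors, ``since each $v_i$ has ever received at most $\Delta$ distinct colors, the marginals $\sum_e x_{e,c}\le 1$ hold for every color.'' This is false: if the palette has size $C$, each offline node has at least $C-\Delta$ colors remaining, so under a uniform fractional assignment $x_{e,c}\le 1/(C-\Delta)$ and thus $\sum_e x_{e,c}\le \Delta/(C-\Delta)$, which exceeds $1$ whenever $C<2\Delta$ — and in fact an adaptive adversary can deliberately arrange for many colors to be simultaneously scarce at all $\Delta$ neighbors. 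The paper's resolution is not to make the marginals hold for every color, but to prove that \emph{most} colors (all but $\eps\Delta$ of them) have fractional load at most $1+\eps$ with probability $\ge 1-n^{-\Omega(\Delta)}$, even against an adaptive adversary. The mechanism is a family of $n^{O(\Delta)}$ exposure martingales, one for each pair $(U,C)$ of potential neighborhood and color set, with step size $O(1/(\sqrt{\eps}\Delta))$ and variance $O(1/\eps)$, to which Freedman's inequality applies; it is this $\exp(-\Theta(\Delta^2))$-scale concentration, union-bounded over roughly $n^{2\Delta}$ pairs, that defeats the adaptive adversary's $\binom{n}{\Delta}$ choices. Your item (i) — looking for a CRS ``robust to adaptive conditioning'' — is a red herring: the standard single-item CRS of \cite{feige2006allocation}, applied per color after each edge independently samples one color from its offline endpoint's residual palette, suffices; what must be made robust is the control on the realized fractional loads, not the rounding step.

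There is a second gap: your argument treats the CRS as a one-shot procedure followed by a small reserve palette of size $o(\Delta)$. Even if the marginal constraints held exactly, a single round of CRS only colors a $(1-1/e)$-fraction of edges, so the uncolored residual has maximum degree $\approx\Delta/e$, which is $\Theta(\Delta)$, not $o(\Delta)$. The paper instead pipelines roughly $\log_{1/q}(\Delta/\text{polylog}\,n)$ interleaved invocations of the partial-coloring subroutine with geometrically shrinking degree bounds $\Delta_i\approx q^i\Delta$ (for $q\approx e^{-1}$), each using a fresh palette of about $\Delta_i$ colors, for a total of $\Delta\cdot\sum_i q^i\approx\frac{e}{e-1}\Delta$ colors, and only then greedily colors the $o(\Delta)$-degree residual. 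Without that recursion, the $\frac{e}{e-1}$ constant has no derivation. Your proposal correctly identifies the BDBKTW bridge and the existence of a martingale-style aggregation step, but it mislocates what the aggregation must control and omits the recursive degree reduction that actually produces the claimed competitive ratio.
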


\subsection{Overview of our approach}

As mentioned, all prior online edge-coloring algorithms except greedy are randomized.
Moreover, these algorithms all assume an \emph{oblivious adversary},  that creates the input in advance without seeing the algorithm's choices.
In contrast, a more challenging \emph{adaptive adversary} creates the input based on the algorithm's choices and randomness so far. Unfortunately, against such an adversary randomization offers no advantage over deterministic algorithms  \cite{ben1994power} 
(see \Cref{lem:adaptive-reduction}). This suggests an approach to prove the existence of  deterministic algorithms: we should design \emph{randomized} algorithms, but ones that are competitive against adaptive adversaries!

\paragraph{The power of the adversary.} The key challenge with this approach (and reason to doubt the existence of deterministic algorithms) is the adaptive adversary's huge number of choices. 
To illustrate this, note that against an oblivious adversary, that fixes a graph in advance, an upper bound of $1/\poly(n)$ on the probability of some bad events for any possible arriving neighborhood suffices to union bound over all arrivals' possible bad events and show that it is unlikely for any to occur.
In contrast, against an adaptive adversary we require significantly sharper bounds, even for a single time step, where an adaptive adversary has up to ${n\choose \Delta}$  choices of $\Delta$ offline neighbors of the arriving online node.\footnote{A similar phenomenon occurs in \cite{kulkarni2022online}, who subsample graphs of high girth $\ell$. Key to their analysis is that each edge belongs to only $\Delta^{\ell}$ many cycles of girth less than $\ell$ in an oblivious setting. For an adaptive setting, this bound becomes $n^{\ell}$, and so union bounding over all such cycles to guarantee high girth requires concentration $1/n^{\ell}$ and not $1/\Delta^{\ell}$.}
This problem is further compounded by factoring in the number of future choices  of the entire graph available to the adversary, which may be super-exponential \cite{mckay1984asymptotics}.

\paragraph{Overcoming the adversary.}
We provide an algorithmic approach which allows us to focus on only roughly ${n \choose \Delta}$ bad events, which we show have exceedingly low probability, $\exp(-\Theta(\Delta^2))\leq 1/{n\choose \Delta}$ (using $\Delta = \Omega(\log n)$), even against an adaptive adversary. This allows us to union bound over these bad events, rather than over all possible future choices.
In contrast, all previous randomized algorithms had only $\exp(-\Delta)$ bad event probability, which is sufficient for the polynomially-many events presented by an oblivious adversary, but not for the $n^\Delta$ many choices of the adaptive one.

Concretely, our algorithm takes the following approach.
Initially, we assign each offline node the same palette $[\Delta+o(\Delta)]$.
For each online node $v_t$, we have each edge $(u,v_t)$ select a color $c$ uniformly and independently at random among the colors still available to $u$, i.e., this edge selects $c$ with probability $$x^{(t)}_{uc}:=\frac{\mathds{1}[c \textrm{ still available for }u]}{|\{c' \textrm{ still available for } u\}|}.$$
Our plan is to assign each color only to a single edge that selected it, and (for now) assign no colors to the other edges. To guarantee edges a good probability of being colored, we therefore need to control the number of collisions per color.
Versus an oblivious adversary, the above random choices result in each edge $(u,v_t)$ selecting each color $c\in [\Delta+o(\Delta)]$ uniformly, i.e., with marginal probability $1/(\Delta+o(\Delta))$, and so each color $c$ is selected less than once in expectation by the $\Delta$ neighbors of $v_t$, i.e., $\E[\sum_u x^{(t)}_{uc}] = 1-o(1)$.\footnote{\cite{blikstad2023simple} intuitively follow a similar approach, but use correlated choices for different nodes and colors, by sampling a single matching between nodes and colors. Unfortunately, an adaptive adversary can use this correlation to break their algorithm.}
Unfortunately, an \emph{adaptive adversary} can easily make the expected number of times a color is selected much larger.\footnote{This is problematic for the common approach used for online edge coloring against an oblivious adversary, by iteratively computing ``fair'' matchings online, corresponding to the different colors \cite{cohen2019tight,saberi2021greedy,kulkarni2022online,blikstad2024online}.} The hope is to show that the adversary cannot accomplish this for too many colors.

Our key idea, allowing us to obtain $\exp(-\Delta^2)$ type concentration, is to focus on \emph{sets} of some $\varepsilon\cdot \Delta$ colors $C$, and sets of $\Delta$ offline nodes $U\subseteq V$, corresponding to potential neighborhoods of online nodes. 
We note that if $U$ is the next online node's neighborhood, then if on average the colors in $C$ have low load, i.e., $\sum_{u\in U}\sum_{c\in C}x^{(t)}_{uc} \leq (1+\varepsilon)\cdot |C|$, and this holds for all pairs $(U,C)$, then the fraction of colors whose load may exceed $1+\varepsilon$ is at most $\varepsilon$ (\Cref{obs:no-bad-implies-mostly-good}).
Thus, we reduce the number of bad events we care about to a moderate ${n\choose \Delta}\cdot \exp(\Delta) = n^{O(\Delta)}$, one for each pair $(U,C)$, as opposed to needing to concern ourselves with all possible futures.
The key challenge is to prove that any pair's average load exceeding $(1+\eps)$ occurs with probability $1/n^{\Omega(\Delta)}$, despite the adversary's adaptive choices.

\paragraph{Proving concentration via martingales.}
To prove strong concentration of $(U,C)$ pairs' average load, for each pair $(U,C)$, we set up a martingale $Z_0,Z_1,\dots,Z_m$, with $Z_i$ taking on the value $$\sum_{u\in U}\sum_{c\in C} \frac{\mathds{1}[c \textrm{ still available for }u]}{|\{c' \textrm{ still available for } u\}|}$$ after the $i^{th}$ edge picked a random color and removed it from its palette.
Crucially, we show that these are indeed martingales, even when facing an adaptive adversary, and that they have constant variance and $O(1/\Delta)$ maximum step size.
Applying Freedman's inequality (\Cref{thm:freedman_inequality}) then allows us to prove that the average load of any fixed pair $(U,C)$ exceeds its expectation of $|C|$ by $\varepsilon|C|=\eps^2|\Delta|$ occurs with probability at most $\exp(-\poly(\eps)\cdot \Delta^2)$.
For $\Delta=\Omega(\log n)$ sufficiently large, this is $1/n^{\Omega(\Delta)}$, which allows us to union bound over all $n^{O(\Delta)}$ bad events and show that with high probability, none occur.\footnote{We remark that the recent $(1+o(1))$-competitive edge coloring algorithm of \cite{blikstad2024online} (under oblivious adversary and edge arrivals) also relied crucially on martingales, though for a very different reason. They did so to control problematic correlations between nodes, while we use martingales primarily to avoid union bounding over all potential futures/histories.}

\paragraph{Contention Resolution.}
So far, we outlined how we prove that with high probability, most colors $c$ have load $\sum_u x^{(t)}_{uc}\leq 1+o(1)$. 
This does not result in each edge being colored, though, as verified for the case $x^{(t)}_{uc} \approx \frac{1}{\Delta}$ for each of the $\Delta$ nodes $u\in N(v_t)$ and each color $c$, where the probability a color is selected is only $1-(1-1/\Delta)^\Delta\approx 1-e^{-1}$, and so at most $\approx \Delta(1-e^{-1})$ colors are assigned (and edge colored).  
This turns out to be the worst case, and a ratio of $1-e^{-1}$ is achievable:
To guarantee each edge is colored with probability close to $1-e^{-1}$, we extend the current coloring by assigning each color $c$ to one of the edges that selected it using a \emph{contention resolution scheme} \cite{feige2006maximizing} (see \Cref{lem:CRS}). 
This results in us assigning color $c$ to edge $(u,v_t)$ with probability roughly $(1-e^{-1})\cdot x^{(t)}_{uc}$ (\Cref{lem:per-edge-coloring-prob}).
Thus, we color each edge with probability $1-e^{-1}$, and by simple concentration inequalities all high-degree nodes have their degree decrease by a factor of roughly $e$ with high probability.

\paragraph{Recursing.}
Our approach decreases the maximum degree of the uncolored subgraph by a factor of roughly $e$ with high probability~using only $\Delta(1+o(1))$ many colors. 
We invoke multiple copies of this algorithm (interleaved in an online fashion),  using a total of
$\Delta(1+o(1))\cdot (1+e^{-1}+e^{-2}+\dots) = \Delta\big(\frac{e}{e-1}+o(1)\big)$ many colors. This way, we decrease the uncolored subgraph's maximum degree to $o(\Delta)$, which we then color  
greedily using a further $2\cdot o(\Delta)=o(\Delta)$ more colors.
Overall, this algorithm requires $\Delta\big(\frac{e}{e-1}+o(1)\big)$ colors with high probability against an adaptive adversary. This, by \Cref{lem:adaptive-reduction}, yields a deterministic $\big(\frac{e}{e-1}+o(1)\big)$-competitive online edge-coloring algorithm, as stated in \Cref{thm:main}.

Our randomized algorithm and its analysis are presented in their entirety in \Cref{sec:adaptive}.

\subsection{Related work}

The first online edge-coloring algorithms (beyond greedy) worked under random-order edge arrivals, where a worst-case graph is revealed in random order. \cite{aggarwal2003switch} gave a $(1+o(1))$-competitive algorithm for multigraphs with $\Delta=\omega(n^2)$, while \cite{bahmani2012online} showed that for simple graphs with only $\Delta=\omega(\log n)$, one can achieve a competitive ratio of $1.27$, later improved to $1+o(1)$ \cite{bhattacharya2021online,kulkarni2022online}.
The first work to obtain results under (oblivious) adversarial arrivals was \cite{cohen2019tight}, who showed that $(1+o(1))$-competitiveness is achievable for known $\Delta=\omega(\log n)$ and that without knowledge of $\Delta$, the optimal competitive ratio is $\frac{e}{e-1}+o(1)$. The former result of \cite{cohen2019tight} was later simplified by \cite{blikstad2023simple}.
Greedy was subsequently beaten in general graphs \cite{saberi2021greedy}, even under edge arrivals, for which \cite{kulkarni2022online} gave a competitive ratio of $\frac{e}{e-1}+o(1)$, very recently improved to $1+o(1)$ \cite{blikstad2024online}.  As stated before, all these results are randomized against an oblivious adversary.

Contention resolution schemes (CRS), which we use, provide a uniform approach for rounding linear programs with packing constants: 
round elements independently, and then resolve contention in a way that guarantees both (i) feasibility of the output set, and (ii) that each rounded element is in the output set with good probability.
This approach was introduced in the context of welfare maximization, where the packing constraint is a rank-one matroid  \cite{feige2006maximizing,feige2006allocation}. This was later formalized as a more general rounding approach for matroid constraints and beyond, by \cite{chekuri2014submodular}.
Contention resolutions schemes have wide-ranging applications, including in submodular maximization \cite{feldman2011unified,chekuri2014submodular}, stochastic probing problems \cite{gupta2013stochastic}, combinatorial sparsification \cite{dughmi2022sparsification}, and recently in online algorithms (against oblivious or even stochastic oblivious adversaries) \cite{naor2025online,patel2024combinatorial}. 
They have also been extended to online settings \cite{feldman2016online}, which have found even more applications, particularly to stochastic problems, most prominently prophet inequalities.
Despite the ubiquity of contention resolution schemes, to the best of our knowledge, this paper presents the first example of a (randomized) CRS being used to design a deterministic algorithm for a deterministic problem.

\paragraph{Concurrent work.}
Concurrently, \cite{dudeja2025randomized} obtain orthogonal results for online edge coloring, via a different algorithm and analysis from ours. They show how to obtain deterministic $(1+\varepsilon)$-competitive online edge coloring under edge arrivals for any constant $\varepsilon>0$, in dense graphs with $\Delta=\Omega(n)$, for sufficiently large $n$.
In contrast, we improve on the greedy algorithm for one-sided node arrivals in bipartite graphs for all sufficiently large $\Delta=\Omega(\log n)$, thus matching the range of degrees of the lower bound of~\cite{bar1992greedy}.
\section{Preliminaries} \label{sec:prelims}

\textbf{Problem definition and notation.} 
In online bipartite edge coloring, the input is an unknown bipartite graph of maximum degree $\Delta$ (this value is known in advance), with $n$ \emph{offline nodes} forming one side of the bipartition, and known up front; at each time $t=1,\dots,n$, a \emph{online node} $v_t$ on the other side is revealed, together with its edges to its neighbors, $N(v_t)$.
An online algorithm must decide for each edge on its arrival what color to assign it, immediately and irrevocably.
The subgraphs induced by each color must form a matching; put otherwise, each node must have at most one edge of each color.
The minimal number of colors (matchings) needed to cover such a graph is $\Delta$, by K\"onig's line coloring theorem \cite{konig1916graphen}. We say an online algorithm is \emph{$\alpha$-competitive} if it uses at most $\alpha$ times more colors than achievable offline, i.e., if the algorithm uses $\alpha\Delta$ colors. Our focus is on randomized algorithms that are $\alpha$-competitive w.h.p.~(with high probability, $1-1/n^c$ for constant $c>1$), which immediately imply algorithms that are $(\alpha+o(1))$-competitive in expectation (i.e., use $(\alpha+o(1))\Delta$ colors in expectation), by simply running greedy with new colors in the unlikely event that the high-probability algorithm uses more than $\alpha\Delta$ colors.

\paragraph{Adversaries.}
We focus on randomized algorithms whose input (online nodes and their edges) is generated \emph{adaptively}, based on the algorithm's prior random choices, by a so-called \emph{adaptive adversary}.\footnote{\cite{ben1994power} further distinguish between offline-adaptive and the weaker online-adaptive model. Our algorithm succeeds versus the stronger adversary, and since we do not consider the weaker adversary, we omit this distinction.} 
A classic result asserts that against such adversaries, randomness yields no benefit.

\begin{lemma}[\!\!\cite{ben1994power}] \label{thm:adaptive_implies_deterministic}\label{lem:adaptive-reduction}
    If there exists a $\gamma$-competitive randomized online edge-coloring algorithm against adaptive adversaries,
    then there exists a deterministic $\gamma$-competitive online edge-coloring algorithm.
\end{lemma}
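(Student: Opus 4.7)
The plan is to recast the interaction between the randomized algorithm $\mathcal{R}$ and an adaptive adversary as a finite two-player zero-sum extensive-form game with perfect information, and then invoke the classical fact that such games admit pure optimal strategies for both players.

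First, I would set up the game $G$ explicitly. The moves alternate between the adversary, who at each round $t\le n$ reveals an online node $v_t$ together with a neighborhood $N(v_t)\subseteq[n]$ subject to the degree cap $\Delta$, and the algorithm, which commits to a color for each newly revealed edge. Both players observe every past move---the algorithm because it is online, the adversary because it is adaptive---so $G$ has perfect information. Capping the palette at $\Delta n$ colors without loss of generality makes all action sets and the entire game tree finite; the payoff paid by the algorithm to the adversary is the number of distinct colors ultimately used.

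Next, I would apply backward induction on $G$. At every algorithm node the value of the subgame is the minimum of its children's values, so placing all mass on any minimizing child is at least as good as any mixed choice; inducting over the finite tree yields a pure algorithm strategy $\mathcal{D}$ whose worst-case payoff equals the minimax value of $G$ taken over \emph{mixed} strategies for the algorithm. Since $\mathcal{R}$ is $\gamma$-competitive against every adaptive adversary, that minimax value is at most $\gamma\Delta$, and hence $\mathcal{D}$ uses at most $\gamma\Delta$ colors against every adversary. Combined with K\"onig's theorem, which gives $\mathrm{OPT}=\Delta$ on every bipartite graph of maximum degree $\Delta$, this makes $\mathcal{D}$ a deterministic $\gamma$-competitive online edge-coloring algorithm.

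The only real subtlety, and the place I would expect the write-up to spend most of its words, is formalizing the ``adaptive adversary'' so that the game is genuinely finite and of perfect information. The adversary's decision rule must depend on the full transcript of the algorithm's prior color choices---this is precisely what robs randomness of its power, since any coin toss has been exposed by the time of the next move---and the algorithm's rule must in turn see the entire revealed input so far. Once this bookkeeping is in place the argument collapses to a single backward induction, which is the content of \cite{ben1994power}.
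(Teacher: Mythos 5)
Your proposal is correct and takes essentially the same route as the paper: both recast the algorithm-versus-adaptive-adversary interaction as a finite perfect-information game tree (the paper calls it an expectimax tree) and observe that, by backward induction, at every algorithm node a pure minimizing move is at least as good as any mixed one, which extracts a deterministic strategy achieving the same competitive ratio. The paper's version is a brief sketch that defers details to Borodin--El-Yaniv, whereas you spell out the finiteness/perfect-information bookkeeping and the reason adaptivity neutralizes randomness, but the underlying argument is identical.
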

\begin{proof}(Sketch.) The proof of the above lemma is direct, though generally yields computationally-inefficient algorithms: traverse the expectimax game tree for the game played between algorithm (minimizing expected competitive ratio) and adversary (maximizing competitive ratio). At  each algorithm-node of the tree, proceed to the child whose expectimax value is as good as that of the current node. See \cite[Chapter 7]{borodin2005online} for a more in-depth discussion of this proof for request-answer games more broadly.
\end{proof}

We note that, as outlined previously, our analysis relies on showing concentration of $n^{O(\Delta)}$ many martingales. By the method of pessimistic estimators \cite{raghavan1988probabilistic}, this allows us to derandomize the algorithm slightly more efficiently with a running time of $n^{O(\Delta)}$, which is sub-exponential for $\Delta=o(n/\log n)$. We leave it as an open problem to get polytime deterministic online edge-coloring algorithms beating greedy.

\paragraph{Contention Resolution Schemes.}
Our algorithms rely on single-item
\emph{contention resolution schemes} (CRSes), due to \cite{feige2006allocation}.
\begin{lemma}[\!\!\cite{feige2006allocation}]\label{lem:CRS}
     Let $\calD$ be a product distribution over subsets of $[n]$, with marginals $x_i$ for $i\in [n]$.
     Then, there exists a randomized algorithm CRS($R,\vec{x}$) which on input set $R\sim \calD$  outputs a subset $O \subseteq R$ that when $R$ is non-empty contains a single element, i.e., $|O|=1$, satisfying
     $$\Pr[i\in O] \geq x_i \cdot \frac{1 - \prod_{j \in [n]} (1 - x_j)}{\sum_{j\in [n]} x_j} \qquad\qquad  \forall i\in [n].$$
\end{lemma}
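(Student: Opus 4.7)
My plan is to establish existence via a Hall/max-flow feasibility argument on a transportation LP. For each non-empty $R$ in $\mathrm{supp}(\calD)$ and each $i \in R$, introduce a variable $y_{R,i} \geq 0$ representing the joint probability that the sampled set equals $R$ and the scheme outputs $i$. Writing $q := (1 - \prod_j(1-x_j))/\sum_j x_j$, I want $y$ satisfying $\sum_{i \in R} y_{R,i} = \Pr[R]$ for every $R$ and $\sum_{R \ni i} y_{R,i} \geq q \cdot x_i$ for every $i$. Any such $y$ immediately yields the CRS: on input $R$, output $i \in R$ with probability $y_{R,i}/\Pr[R]$; this produces $|O|=1$ whenever $R \neq \emptyset$ and achieves the required marginal bound.

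Total supply equals total demand: $\sum_R \Pr[R] = 1 - \prod_j(1-x_j) = q \sum_j x_j$. Hence by the standard transportation/Hall feasibility theorem, such a $y$ exists iff every $S \subseteq [n]$ satisfies
\[
q \sum_{i \in S} x_i \ \leq\ 1 - \prod_{i \in S}(1 - x_i),
\]
i.e., $q \leq f(S) := \bigl(1 - \prod_{i \in S}(1-x_i)\bigr) / \sum_{i \in S} x_i$. Since $q = f([n])$ by construction, I only need to show $f$ is non-increasing under set inclusion, i.e., $f(S) \geq f(S \cup \{j\})$ for every $j \notin S$.

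With $A := \sum_{i \in S} x_i$ and $B := \prod_{i \in S}(1-x_i)$, a short cross-multiplication reduces $f(S) \geq f(S \cup \{j\})$ to the single inequality $B(1 + A) \leq 1$, which is immediate from $1 + A \leq e^{A}$ combined with $B \leq \prod_{i \in S} e^{-x_i} = e^{-A}$. I expect the main obstacle to be setting up the Hall-type reduction correctly and pinning down the right cut inequality; once that is in place, the analytic content of the bound collapses to the one-line estimate $(1+A)e^{-A} \leq 1$. The argument is non-constructive, but an explicit CRS can be extracted by computing any feasible transportation flow $y$ (e.g.\ via max-flow), which matches Feige's original algorithmic scheme.
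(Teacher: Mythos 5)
The paper does not prove this lemma; it is imported as a black box from \cite{feige2006allocation}, so there is no in-paper proof to compare against. Evaluating your argument on its own terms, it is correct. The transportation-LP framing is sound: for a product distribution, the supply reachable from a set $S$ of demand nodes (elements) is $\Pr[R\cap S\neq\emptyset]=1-\prod_{i\in S}(1-x_i)$, total supply $\Pr[R\neq\emptyset]$ equals total demand $q\sum_j x_j$ by the definition of $q$, and the Gale--Hoffman/Hall feasibility condition reduces to $q\le f(S)$. Your cross-multiplication reducing $f(S)\ge f(S\cup\{j\})$ to $x_j\bigl(1-B(1+A)\bigr)\ge 0$ checks out, and $(1+A)e^{-A}\le 1$ closes it. Two cosmetic points: you should restrict to $S$ with $\sum_{i\in S}x_i>0$ so that $f(S)$ is defined (otherwise the cut inequality is the trivial $0\le0$), and you should note that the chain $f(S)\ge f(S\cup\{j\})\ge\cdots\ge f([n])=q$ stays well-defined once the denominator is positive. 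Your route is genuinely different from Feige's, which constructs an explicit selection rule (a ``fair'' contention resolution scheme) and verifies the marginal bound directly by a combinatorial computation, whereas yours establishes existence via LP/flow feasibility and only extracts an algorithm a posteriori by solving for a feasible transportation plan. The LP route is arguably slicker, and as a bonus it shows that equality $\Pr[i\in O]=q\,x_i$ is forced (total supply equals total demand, and every demand is met with at least $q\,x_i$), recovering the ``fairness'' of Feige's scheme without having to engineer it.
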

Strictly speaking, the guarantee that $O$ be non-empty when $R$ is non-empty is not usually provided in definitions of CRSs, but it is easy to satisfy by picking an arbitrary element in $R$ if none is selected.

\paragraph{Martingales.}
Our analysis relies on Freedman's inequality for martingales, whose definition we now recall.
\begin{definition}
A sequence of random variables $Z_0,Z_1,\dots,Z_m$ is a  \emph{martingale} if 
$$\E[Z_i \mid Z_0,Z_1,\dots,Z_{i-1}] = Z_{i-1} \qquad \forall i\in [m].$$
\end{definition}

\begin{lemma}[Freedman's Inequality \cite{freedman1975tail}]\label{thm:freedman_inequality}
    Let $Z_0,\dots,Z_m$ be a martingale. 
    If $|Z_i - Z_{i-1}| \leq A$ for all $i \geq 1$ and
    $$\sum_{i=1}^{m} \E[ (Z_i - Z_{i-1})^2 \mid Z_0,Z_1,\dots,Z_{i-1}] \leq \sigma^2$$ always.
    Then, for any real $\lambda \geq 0$:
    \begin{equation*}
        \Pr[ Z_m - Z_0 \geq \lambda ] \leq \exp\left( - \frac{\lambda^2}{2(\sigma^2 + A \lambda / 3)} \right).
    \end{equation*}
\end{lemma}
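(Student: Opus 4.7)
The plan is to follow the classical Cram\'er--Chernoff exponential moment method, which underlies essentially all Bernstein-type deviation inequalities. For a free parameter $\theta > 0$ to be optimized at the end, Markov's inequality gives $\Pr[Z_m - Z_0 \ge \lambda] \le e^{-\theta\lambda}\,\E\!\left[e^{\theta(Z_m - Z_0)}\right]$, so the entire task reduces to bounding the moment generating function of the martingale's terminal value.

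Write $D_i := Z_i - Z_{i-1}$ and let $\mathcal{F}_{i-1}$ denote the $\sigma$-algebra generated by $Z_0,\ldots,Z_{i-1}$. The workhorse one-step estimate I would establish is
\[
\E\bigl[e^{\theta D_i} \,\big|\, \mathcal{F}_{i-1}\bigr] \;\le\; \exp\!\left(\frac{\phi(\theta A)}{A^2}\cdot\E\bigl[D_i^2 \,\big|\, \mathcal{F}_{i-1}\bigr]\right),\qquad \phi(y) := e^y - 1 - y.
\]
This follows from the pointwise inequality $e^{\theta x} \le 1 + \theta x + \tfrac{\phi(\theta A)}{A^2}x^2$ for all $x \le A$, which is an immediate consequence of the monotonicity of $y \mapsto (e^y - 1 - y)/y^2$ on $\mathbb{R}$; one then takes conditional expectation, uses the martingale property $\E[D_i \mid \mathcal{F}_{i-1}] = 0$, and finishes via $1+u \le e^u$. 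Setting $V_i := \sum_{j\le i} \E[D_j^2 \mid \mathcal{F}_{j-1}]$, the one-step estimate shows that $M_i := \exp\!\bigl(\theta(Z_i - Z_0) - \tfrac{\phi(\theta A)}{A^2}V_i\bigr)$ is a supermartingale with $M_0 = 1$, so $\E[M_m] \le 1$. Since $V_m \le \sigma^2$ almost surely by hypothesis, this yields $\E[e^{\theta(Z_m - Z_0)}] \le \exp\!\bigl(\tfrac{\phi(\theta A)}{A^2}\sigma^2\bigr)$, and Markov then gives $\Pr[Z_m - Z_0 \ge \lambda] \le \exp\!\bigl(-\theta\lambda + \tfrac{\phi(\theta A)}{A^2}\sigma^2\bigr)$.

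The main remaining obstacle---really the only part that requires care rather than mechanical computation---is choosing $\theta$ so the exponent collapses to the advertised form. Rather than optimize exactly (which produces the cleaner but messier Bennett bound), I would combine the analytic estimate $\phi(y) \le \tfrac{y^2/2}{1 - y/3}$ on $[0,3)$ (easily verified by a term-by-term comparison of Taylor series) with the specific choice $\theta := \lambda/(\sigma^2 + A\lambda/3)$. That choice makes $1 - \theta A/3 = \sigma^2/(\sigma^2 + A\lambda/3)$, which plugged in collapses the $\phi$-term to exactly $\theta\lambda/2$, reducing the overall exponent to $-\theta\lambda + \theta\lambda/2 = -\lambda^2/\bigl(2(\sigma^2 + A\lambda/3)\bigr)$, matching the stated bound. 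The side condition $\theta A < 3$ needed for the analytic estimate follows automatically from $\sigma^2 > 0$, so no additional case analysis is required.
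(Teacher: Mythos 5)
The paper states Freedman's inequality as a citation to \cite{freedman1975tail} and does not supply a proof, so there is nothing in the paper to compare against; your exponential-moment supermartingale argument is the standard classical derivation and is correct. The only small points of informality---the edge cases $\sigma^2=0$ and $\lambda=0$, which you wave off---are both trivially handled ($\sigma^2=0$ forces $Z_m=Z_0$ a.s., and $\lambda=0$ makes the bound read $\Pr[\cdot]\le 1$), so the proof goes through as written.
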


Finally, we need the following simple inequality.
\begin{fact} \label{lemma:aux_inequality}
    Let $0 \leq x \leq 1$. Then 
        $\frac{1 - \exp(-1-x)}{1+x} \geq 1 - e^{-1} - x.$
\end{fact}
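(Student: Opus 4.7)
The plan is to clear the denominator and reduce the claim to an elementary monotonicity statement. Since $1+x > 0$ on $[0,1]$, multiplying both sides of the desired inequality by $1+x$ gives the equivalent form
\[
1 - e^{-(1+x)} \;\geq\; (1+x)\bigl(1 - e^{-1} - x\bigr).
\]
Expanding the right-hand side yields $1 - e^{-1}(1+x) - x^2$, so after cancelling the $1$'s the inequality rearranges to $h(x) \geq 0$, where
\[
h(x) \;:=\; e^{-1}(1+x) + x^2 - e^{-(1+x)}.
\]

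It then suffices to check the boundary value $h(0) = e^{-1} + 0 - e^{-1} = 0$ and to observe that
\[
h'(x) \;=\; e^{-1} + 2x + e^{-(1+x)} \;>\; 0 \qquad \text{for all } x \geq 0.
\]
Hence $h$ is strictly increasing on $[0,\infty)$ and starts at $0$, so $h(x) \geq 0$ on the whole interval $[0,1]$ (in fact on all of $[0,\infty)$), which is the desired inequality.

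There is no real obstacle here, as the statement essentially proves itself once the denominator is cleared. The only step requiring a bit of care is the algebraic expansion of $(1+x)(1-e^{-1}-x)$, where one must correctly track the $-x^2$ term; it is precisely this $x^2$ slack, together with the positivity of $e^{-(1+x)}$ and $e^{-1}$, that guarantees $h' > 0$ on the whole nonnegative axis and thereby makes the argument go through by a single derivative computation.
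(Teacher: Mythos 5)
Your proof is correct, and it takes a genuinely different route from the paper's. The paper bounds the numerator and the denominator separately via the two elementary estimates $\exp(-x) \leq 1 - x/2$ (valid on $[0,1]$, from the alternating Taylor series) and $\tfrac{1}{1+x} \geq 1 - x$, multiplies these out, and checks that the resulting excess term $\tfrac{3x - x^2}{2e}$ is nonnegative for $x \in [0,1]$. You instead clear the denominator to reduce the claim to $h(x) := e^{-1}(1+x) + x^2 - e^{-(1+x)} \geq 0$, and close it by noting $h(0) = 0$ and $h'(x) = e^{-1} + 2x + e^{-(1+x)} > 0$. Both arguments are short and elementary; yours is slightly cleaner structurally (one function, one sign check on the derivative) and, as you observe, actually establishes the inequality on all of $[0,\infty)$ rather than just $[0,1]$, whereas the paper's Taylor truncation $\exp(-x) \leq 1 - x/2$ is only valid for $x \leq 1$ and so its proof is intrinsically confined to that range. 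The extra generality is immaterial for the application, since the lemma is only invoked with $x = \varepsilon \leq 1$, but it is a modest bonus of your approach.
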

\begin{proof}
    First, for $x \leq 1$ we have that $\exp(-x) = \sum_{i\geq 0} \frac{(-x)^i}{i!} \leq 1-x+\frac{x^2}{2} \leq 1 - \frac{x}{2}$. Furthermore, for any $x \geq 0$ we have: $\frac{1}{1 + x} \geq 1 - x$. Therefore, we obtain:
    \begin{align*}
        \frac{1 - \exp(-1-x)}{1+x} & \geq \left(1-e^{-1}\cdot \left(1-\frac{x}{2}\right)\right)\cdot (1-x) = 1-e^{-1} - x + \frac{3x}{2e} - \frac{x^2}{2e} \geq 1-e^{-1} - x,
    \end{align*}
    which ends the proof.
\end{proof}
\section{An Adaptive Algorithm}\label{sec:adaptive}

In this section we present our $\big(\frac{e}{e-1}+o(1)\big)$-competitive online edge-coloring algorithm that works against adaptive adversaries. Our main algorithm is an algorithm that decreases the uncolored subgraph's maximum degree at a rate consistent with the above.

For $\eps := 2\sqrt[5]{({\ln n})/{\Delta}}$, our main algorithm provides a $\Delta(1+\sqrt{\eps})$-edge-coloring of a (large) subgraph of $G$ of maximum degree $\Delta(G)\leq \Delta$, as follows.
Initially, we provide each node with the same \emph{palette} of available colors, $P(u)\gets \lceil (1+\sqrt{\eps})\Delta \rceil$.
At each time step we have each edge $(u,v_t)$ pick a single color $c\in P(u)$ uniformly at random, i.e., with probability $x^{(t)}_{uc} \gets \frac{1}{|P(u)|}$, remove this color from $P(u)$ (even if we do not finally assign this color to edge $(u,v_t)$). We then use a contention resolution scheme (\Cref{lem:CRS}) for each color $c$ to pick a single edge $(u,v_t)$ that selected $c$. 
Our pseudocode is given in \Cref{alg:adaptive-edge-coloring}.

\begin{algorithm}[!htb]
	\caption{Partial Edge Coloring}
	\label{alg:adaptive-edge-coloring}
	\begin{algorithmic}[1]
        \State \textbf{Input parameter:} $\Delta\in \mathbb{Z}$. \Comment{Intuitively, $\Delta(G)\leq \Delta$.}
        \State \textbf{\underline{Initialization:}} Set $\eps\gets 2\sqrt[5]{\frac{\ln n}{\Delta}}$. \textbf{For each} offline node $u$: create palette $P(u)\gets \mathcal{C} := \lceil (1+\sqrt{\varepsilon})\Delta \rceil$.
		\For{\textbf{each} online node $v_t$ on arrival}
			\For{\textbf{each} color $c \in \mathcal{C}$ and node $u \in N(v_t)$} 
			\State Set $x^{(t)}_{uc} \gets \frac{\mathds{1}[c\in P(u)]}{|P(u)|}$.
			\EndFor 
			\For{\textbf{each} edge $e=(u,v_t)$}
			\State Pick color $c(e) \in P(u)$ uniformly at random. \Comment{$\Pr[c(e)\textrm{ picked}] = x_{uc}^{(t)}$}
			\State Remove $c(e)$ from $P(u)$.
			\EndFor
			\For{\textbf{each} color $c \in \mathcal{C}$}
			\State Set $R_c \gets \{e = (u,v_t) \mid c(e) = c\}$.
			\State Set $\overrightarrow{x_c}$ to be the vector $(x^{(t)}_{uc})_{u \in N(v_t)}$.
			\State Set $S_c \gets  CRS(R_c,\overrightarrow{x_c})$.
			\If{$S_c\neq \emptyset$}
            \State Assign edge $e_c\in S_c$ the color $c$.
            \EndIf
			\EndFor
		\EndFor
	\end{algorithmic}	
\end{algorithm}	

As we show, this algorithm decreases the maximum degree of the uncolored subgraph in a dependable rate (i.e., w.h.p.), while only using $\frac{e}{e-1}+o(1)$ times more colors. Our main technical result is the following:

\begin{restatable}{theorem}{adaptivesubroutine}\label{thm:adaptive-subroutine}
    \Cref{alg:adaptive-edge-coloring} applied to bipartite graphs $G$ of maximum degree $\Delta(G)\leq \Delta$ for $\Delta \geq 32\cdot \ln n$ yields a feasible $\Delta(1+\sqrt{\varepsilon})$-edge-coloring of a subgraph $H\subseteq G$ such that $\Delta(G\setminus H)\leq (e^{-1} + 3\sqrt{\eps})\cdot \Delta$ with probability at least $1-n^{-5}$, and this guarantee holds even against an adaptive adversary.
\end{restatable}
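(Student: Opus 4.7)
The proof splits into feasibility and the high-probability degree bound. Feasibility is structurally immediate: colors are removed from $P(u)$ upon being picked by an edge at $u$, so $u$ uses each color at most once, and each CRS output $S_c$ contributes at most one edge colored $c$ at time $t$, so $v_t$ uses each color at most once. The main work is showing $\Delta(G \setminus H) \le (e^{-1} + 3\sqrt\eps)\Delta$ w.h.p. I would split this into three stages: (i) concentrate the ``loads'' $L_c^{(t)} := \sum_{u \in N(v_t)} x_{uc}^{(t)}$ so that at most $\eps\Delta$ colors are overloaded per time $t$; (ii) deduce that each edge is colored with probability $\ge 1 - e^{-1} - O(\sqrt\eps)$ via the CRS; and (iii) Azuma-concentrate each offline node's uncolored degree.

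For stage~(i), following the overview's observation, it suffices to show $\sum_{u \in U}\sum_{c \in C} x_{uc}^{(t)} \le (1+\eps)|C|$ for every $(U,C)$ with $|U| = \Delta$ offline nodes and $|C| = \lceil\eps\Delta\rceil$ colors, for if more than $\eps\Delta$ colors were overloaded at time $t$, selecting them as $C$ with $U = N(v_t)$ would violate this. For each such pair I would set up the process $Z_i^{(U,C)} := \sum_{u \in U,\,c \in C} \mathds{1}[c \in P(u)]/|P(u)|$ indexed over successive edge-processing events. A direct calculation of the expected change when some edge at $u' \in U$ uniformly samples $c' \in P(u')$ shows the expectation is zero (the loss when $c' \in C$ exactly cancels the gains for remaining $c \in C \cap P(u')$), so $Z_i^{(U,C)}$ is a martingale---crucially, even against an adaptive adversary, since the adversary's choice of $v_t$ is measurable with respect to the past. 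Because $|P(u)| \ge \sqrt\eps\Delta$ throughout (palettes start at $\lceil(1+\sqrt\eps)\Delta\rceil$ and shrink by at most $\Delta$), each step has magnitude $O(1/(\sqrt\eps\Delta))$ and conditional variance $O(1/(\eps\Delta^2))$, and summing over the at most $|U|\cdot\Delta = \Delta^2$ non-trivial steps yields $\sigma^2 = O(1/\eps)$. Since $Z_0 = \Delta |C|/|\mathcal{C}| \le |C|$, Freedman's inequality with $\lambda = \eps|C| = \eps^2\Delta$ gives
\[
\Pr\bigl[\max_i Z_i^{(U,C)} > (1+\eps)|C|\bigr] \;\le\; \exp\bigl(-\Omega(\eps^5 \Delta^2)\bigr) \;=\; \exp(-\Omega(\Delta \ln n))
\]
by the choice $\eps = 2\sqrt[5]{\ln n/\Delta}$. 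A union bound over the $\binom{n}{\Delta}\binom{|\mathcal{C}|}{\eps\Delta} = n^{O(\Delta)}$ pairs will then show that the global event $\mathcal{E}$ (``fewer than $\eps\Delta$ colors are overloaded at each time'') occurs with probability $1 - n^{-\Omega(\Delta)}$.

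For stages~(ii)--(iii), conditional on $\mathcal{E}$, each good color $c$ at time $t$ has $L_c^{(t)} \le 1+\eps$, so \Cref{lem:CRS} with $1 - \prod_j(1-x_{jc}^{(t)}) \ge 1 - \exp(-L_c^{(t)})$ and \Cref{lemma:aux_inequality} gives $\Pr[e \in S_c \mid \text{history}] \ge x_{uc}^{(t)}(1 - e^{-1} - \eps)$. Summing over good colors in $P(u)$ and bounding the bad-color mass by $\sum_{c\text{ bad}} x_{uc}^{(t)} \le \eps\Delta/|P(u)| \le \sqrt\eps$ yields $p_t := \Pr[(u,v_t)\text{ uncolored} \mid \text{history}] \le e^{-1} + 2\sqrt\eps$ for each neighbor $u$ of $v_t$. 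Fixing an offline $u$, I would apply Azuma--Hoeffding to the Doob-style martingale $\tilde M_t := \sum_{s \le t}(\mathds{1}[(u,v_s)\text{ uncolored}] - p_s)$, stopped the first time $\mathcal{E}$ fails (which preserves the martingale property); with at most $d_u \le \Delta$ non-zero $[-1,1]$-bounded increments, Azuma gives $\Pr[\tilde M_n > \sqrt\eps\Delta] \le \exp(-\eps\Delta/2) \le n^{-16}$ using $\eps\Delta \ge 32\ln n$. On $\mathcal{E}$ and this Azuma event, $u$'s uncolored degree equals $\tilde M_n + \sum_s p_s \le \sqrt\eps\Delta + (e^{-1}+2\sqrt\eps)\Delta = (e^{-1}+3\sqrt\eps)\Delta$. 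A final union bound over the $n$ offline nodes, combined with $\Pr[\overline{\mathcal{E}}]$, will deliver the claimed $1 - n^{-5}$ guarantee.

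The hard part is stage~(i): beating the adaptive adversary's $n^{O(\Delta)}$ possible neighborhoods demands concentration of strength $\exp(-\Omega(\Delta\ln n))$, and the martingale's step size $O(1/(\sqrt\eps\Delta))$ and variance $O(1/\eps)$ must be tuned just finely enough for Freedman to deliver precisely this rate---which is what $\eps = 2\sqrt[5]{\ln n/\Delta}$ is engineered for, and also explains the requirement $\Delta = \Omega(\ln n)$.
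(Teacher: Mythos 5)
Your stages~(i) and~(ii) track the paper's argument closely: you set up the same per-pair exposure martingales $Z_i^{(U,C)}$, verify the martingale property and the step-size/variance bounds, invoke Freedman, union bound over the $n^{O(\Delta)}$ pairs, and then push through the CRS guarantee with \Cref{lemma:aux_inequality} to obtain $\Pr[(u,v_t)\text{ colored}] \ge 1-e^{-1}-2\sqrt\eps$ conditionally on $\mathcal{E}$. Your stage~(iii) coupling/Azuma argument for offline nodes is also essentially what the paper does (it uses coupling to a Binomial plus Chernoff, which is the same mechanism).

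However, there is a genuine gap: you bound only the \emph{offline} nodes' residual degree. The statement asserts $\Delta(G\setminus H)\le(e^{-1}+3\sqrt\eps)\Delta$, which requires a bound for the \emph{online} nodes $v_t$ as well, and your sequential martingale $\tilde M_t = \sum_{s\le t}(\mathds{1}[(u,v_s)\text{ uncolored}]-p_s)$ is tailored to an offline $u$ whose edges arrive spread across many time steps. For an online node $v_t$, all $\deg(v_t)$ of its edges are resolved at the single time step $t$, so there is no sequence of time steps over which to apply Azuma; the random variables $\mathds{1}[(u,v_t)\text{ uncolored}]$ for $u\in N(v_t)$ are all realized simultaneously and are \emph{not} independent (they are negatively correlated through the CRS, since distinct edges compete for the same colors). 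You therefore need a separate concentration argument at time $t$. The paper handles this by observing that the number of colored edges at $v_t$ equals the number of non-empty bins in an asymmetric balls-and-bins process (each neighbor $u$ throws a ball into a color-bin $c$ with probability $x_{uc}^{(t)}$, and CRS fills each non-empty bin once), and then appeals to Chernoff--Hoeffding bounds for negatively correlated indicators \cite{dubhashi1996balls}. An alternative fix within your framework would be to run an edge-exposure martingale \emph{within} time step $t$, conditioning one by one on the color choices $c(e)$ for $e\ni v_t$: each such choice changes the number of colored edges of $v_t$ by at most a constant, giving an Azuma bound with $\deg(v_t)\le\Delta$ steps. Either way, this missing half of stage~(iii) is needed before the union bound can deliver the claimed $1-n^{-5}$.
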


Following \Cref{alg:adaptive-edge-coloring} by the greedy algorithm using new colors numbered $(1+\sqrt{\varepsilon})\Delta+1$ and higher (or more precisely, interleaving the two algorithms) directly implies a $(1+\sqrt{\varepsilon}+2e^{-1}+6\eps)\approx 1.73$-competitive online edge-coloring algorithm versus adaptive adversaries. However, we can do better by pipelining a number of invocations of \Cref{alg:adaptive-edge-coloring} in an online fashion; concretely, all edges of an arriving online vertex not colored by the $i^{th}$ copy of algorithm \Cref{alg:adaptive-edge-coloring}, are then revealed in an online fashion to the $(i+1)^{th}$ copy. Each copy of \Cref{alg:adaptive-edge-coloring} roughly decreases the degree by a factor of $e$ (w.h.p.). 
Repeating this until the residual degree is $o(\Delta)$, 
while using roughly $\Delta(1+e^{-1}+e^{-2}+\dots) \approx 1.58 \Delta$ colors, we can do better, as in the following theorem.

\begin{restatable}{theorem}{adaptivealgo}\label{thm:adaptive-algo}
    There exists a randomized online bipartite edge-coloring algorithm that is $\big(\frac{e}{e-1}+100\sqrt[11]{\frac{\ln n }{\Delta}}\big)$-competitive w.h.p.~against an adaptive adversary, provided $\Delta\geq 10^{11}\cdot \ln n$. In particular, for $\Delta=\omega(\log n)$, this algorithm's competitive ratio is $\big(\frac{e}{e-1}+o(1)\big)$.
\end{restatable}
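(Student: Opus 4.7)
I would pipeline a sequence of instances of \Cref{alg:adaptive-edge-coloring} and finish with greedy. Set $\Delta_0 := \Delta$ and recursively $\varepsilon_i := 2\sqrt[5]{(\ln n)/\Delta_i}$ and $\Delta_{i+1} := (e^{-1} + 3\sqrt{\varepsilon_i})\Delta_i$, and let $k$ be the first index with $\Delta_k \leq T := \Delta \cdot (\ln n/\Delta)^{1/11}$. For each $0 \leq i < k$ I would reserve a private palette of $\lceil \Delta_i(1+\sqrt{\varepsilon_i})\rceil$ fresh colors and run a dedicated instance of \Cref{alg:adaptive-edge-coloring} with input parameter $\Delta_i$ on that palette. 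When online vertex $v_t$ arrives, I would present its incident edges to instance $0$; the edges that instance $0$ leaves uncolored are then presented to instance $1$; and so on. Any edges still uncolored after instance $k-1$ are colored greedily on a final palette of at most $2\Delta_k$ fresh colors.

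\textbf{Correctness.} I would argue by induction on $i$ that the graph presented to instance $i$ has max degree $\leq \Delta_i$. The base case $i=0$ holds by hypothesis. For the step, \Cref{thm:adaptive-subroutine} applied to instance $i-1$ with parameter $\Delta_{i-1}$ produces a feasible $\Delta_{i-1}(1+\sqrt{\varepsilon_{i-1}})$-coloring and a residual subgraph of max degree $\leq \Delta_i$, each with probability at least $1 - n^{-5}$, even against the (adaptive) composite adversary formed by the original adversary together with the earlier instances. The precondition $\Delta_i \geq 32 \ln n$ of \Cref{thm:adaptive-subroutine} holds throughout since $\Delta_i \geq T \geq 10^{10}\ln n$ under the hypothesis $\Delta \geq 10^{11}\ln n$. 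Union-bounding over the $k = O(\log \Delta)$ stages shows every invocation succeeds with probability $\geq 1 - n^{-4}$, after which greedy properly colors the degree-$\leq\Delta_k$ residual graph with $\leq 2\Delta_k$ colors.

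\textbf{Counting colors.} On the success event, $C \leq \sum_{i=0}^{k-1} \Delta_i(1+\sqrt{\varepsilon_i}) + 2\Delta_k$. Since $\Delta_i$ is decreasing, $\varepsilon^* := \varepsilon_{k-1} = \max_i \varepsilon_i$, so $\Delta_i \leq \Delta (e^{-1}+3\sqrt{\varepsilon^*})^i$. Summing the geometric series yields $\sum_{i<k}\Delta_i \leq \Delta/(1 - e^{-1} - 3\sqrt{\varepsilon^*}) = \Delta \cdot \frac{e}{e-1}(1 + O(\sqrt{\varepsilon^*}))$, and hence $\sum_i \sqrt{\varepsilon_i}\Delta_i \leq \sqrt{\varepsilon^*}\sum_i \Delta_i = O(\sqrt{\varepsilon^*})\Delta$. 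The stopping rule enforces $\Delta_{k-1} \geq T$, from which $\sqrt{\varepsilon^*} \leq \sqrt{2}(\ln n/T)^{1/10} = O((\ln n/\Delta)^{1/11})$, while $2\Delta_k \leq 2T = O((\ln n/\Delta)^{1/11})\Delta$. Assembling everything gives $C/\Delta \leq \frac{e}{e-1} + O((\ln n/\Delta)^{1/11})$, and a careful accounting of the hidden constants pins the prefactor down to at most $100$ under the hypothesis $\Delta \geq 10^{11}\ln n$.

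\textbf{Main obstacle.} The delicate step is the calibration of $T$: it must be large enough that every intermediate $\Delta_i$ exceeds $32\ln n$ (so that \Cref{thm:adaptive-subroutine} applies at each stage) and that the degradation factor $e^{-1}+3\sqrt{\varepsilon^*}$ stays bounded away from $1$, yet small enough that the final greedy phase and the tail of the geometric series contribute only $O((\ln n/\Delta)^{1/11})\Delta$ colors. The competing constraints $T/\Delta$ versus $(\ln n/T)^{1/10}$ balance precisely when $T = \Theta(\Delta \cdot (\ln n/\Delta)^{1/11})$, which is the source of the eleventh root in the competitive ratio.
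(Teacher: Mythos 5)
Your proposal is correct and follows essentially the same approach as the paper: pipeline instances of \Cref{alg:adaptive-edge-coloring} with fresh palettes until the residual degree falls below a threshold $T = \Delta^{10/11}\ln^{1/11}n$, finish with greedy, and bound the total colors via a geometric series in a uniform upper bound on the per-stage decay factors. The only cosmetic difference is that the paper fixes a single conservative decay factor $q = e^{-1} + \lambda$ (with $\lambda = 3\sqrt{2}(\ln n/\Delta)^{1/11}$) up front, whereas you define $\Delta_i$ recursively with the true per-stage factor and then bound everything by $\varepsilon^* = \varepsilon_{k-1}$, which amounts to the same thing.
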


The proof is fairly direct, though somewhat calculation heavy due to quantification of the $o(1)$ terms, and is therefore deferred to \Cref{sec:partial-to-full}.

By combining \Cref{thm:adaptive-algo} with \Cref{lem:adaptive-reduction} we obtain our main result.
\begin{theorem}\label{thm:deterministic-algo}
    There exists a deterministic online bipartite edge-coloring algorithm with competitive ratio less than two for for bipartite graphs with sufficiently large maximum degree $\Delta=\Omega(\log n)$ and competitive ratio $\big(\frac{e}{e-1}+o(1)\big)$ for $\Delta = \omega(\log n)$.
\end{theorem}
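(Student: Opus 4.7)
The plan is a direct composition of the two immediately preceding results. I would invoke \Cref{thm:adaptive-algo} to obtain a randomized online bipartite edge-coloring algorithm whose competitive ratio is $\gamma := \frac{e}{e-1} + 100\sqrt[11]{(\ln n)/\Delta}$ with high probability against an adaptive adversary, valid whenever $\Delta \geq 10^{11}\ln n$. As already noted in \Cref{sec:prelims}, this high-probability guarantee is converted into an in-expectation guarantee with only an $o(1)$ additive loss in the ratio by falling back to greedy with a fresh palette on the low-probability failure event. Feeding the resulting randomized, adaptive-adversary-competitive algorithm into \Cref{lem:adaptive-reduction} produces a deterministic $\gamma$-competitive algorithm in the same regime of $\Delta$.

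The two asymptotic claims then follow by reading off the dependence of $\gamma$ on $\Delta$. Since $\frac{e}{e-1} = 1.5819\ldots$, the gap $2 - \frac{e}{e-1}$ exceeds a fixed positive constant $\delta_0 > 0.4$; choosing $\Delta \geq C \ln n$ for a sufficiently large constant $C$ forces $100\sqrt[11]{(\ln n)/\Delta} < \delta_0$, so $\gamma < 2$, which gives the first bound. For the second bound, whenever $\Delta = \omega(\log n)$ one has $(\ln n)/\Delta = o(1)$, hence the additive correction $100\sqrt[11]{(\ln n)/\Delta}$ is $o(1)$ as well, and $\gamma = \frac{e}{e-1} + o(1)$.

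There is no substantive obstacle at this step, as all the heavy lifting is performed upstream: \Cref{thm:adaptive-algo} provides adaptive-adversary concentration together with the geometric pipelining of copies of \Cref{alg:adaptive-edge-coloring}, and \Cref{lem:adaptive-reduction} does the derandomization. The only care needed is to verify that the hypothesis of \Cref{lem:adaptive-reduction} — existence of a $\gamma$-competitive randomized algorithm against an adaptive adversary — is satisfied in expectation (not merely with high probability), which is precisely why the high-probability-to-expectation conversion via a greedy fallback is invoked explicitly.
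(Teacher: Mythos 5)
Your proposal is correct and coincides with the paper's own (one-line) argument: combine \Cref{thm:adaptive-algo} with \Cref{lem:adaptive-reduction} and read off the asymptotics. Your explicit treatment of the high-probability-to-expectation conversion (via the greedy fallback, as flagged in \Cref{sec:prelims}) is a welcome clarification of a step the paper leaves implicit when invoking \Cref{lem:adaptive-reduction}.
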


The next section is dedicated to the core of our analysis, namely proving \Cref{thm:adaptive-subroutine}, whereby \Cref{alg:adaptive-edge-coloring} provides a partial edge coloring that decreases the maximum degree of the uncolored graph at a rate of roughly $1-e^{-1}$ per color, w.h.p., even against adaptive adversaries. 

\subsection{Proof that \Cref{alg:adaptive-edge-coloring} Provides an Effective Partial Coloring}

Note that by construction, for any time $t$ and node $u\in N(v_t)$, we have that $\sum_{c} x^{(t)}_{uc}=1$. We show that if a color $c$ (nearly) satisfies its analogous fractional degree constraints, this yields a high probability of coloring edges $(u,v_t)$ that selected color $c$.
\begin{definition}
We say a color $c$ is \emph{\good at time $t$} if $\sum_{u\in N(v_t)} x^{(t)}_{uc}\leq 1+\varepsilon$.
\end{definition}
\begin{lemma}\label{lem:per-edge-coloring-prob}
    For fixed time $t$ and values $x^{(t)}_{uc}$, if color $c$ is \good at time $t$ and $u\in N(v_t)$, then:
    $$\Pr[(u,v_t) \mathrm{\ colored\ } c] \geq x^{(t)}_{uc}\cdot (1-e^{-1}-\varepsilon).$$
\end{lemma}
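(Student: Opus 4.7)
The statement is essentially a direct application of the single-item CRS guarantee (\Cref{lem:CRS}) combined with the auxiliary inequality (\Cref{lemma:aux_inequality}), once we verify that the relevant marginals come from a product distribution. I would proceed in four short steps.

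\textbf{Step 1: identify the product distribution.} In \Cref{alg:adaptive-edge-coloring}, conditional on the values $x^{(t)}_{uc}$ (which are determined by the palettes $P(u)$ before the arrival of $v_t$), each edge $e = (u, v_t)$ independently picks its color $c(e) \in P(u)$ uniformly at random. Hence the indicator $\mathds{1}[c(e) = c]$ is a Bernoulli with mean exactly $x^{(t)}_{uc}$, and these indicators are independent across $u \in N(v_t)$. Therefore $R_c = \{e = (u,v_t) \mid c(e)=c\}$ is distributed as a draw from the product distribution over $N(v_t)$ with marginals $(x^{(t)}_{uc})_{u \in N(v_t)}$, matching the hypothesis of \Cref{lem:CRS}.

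\textbf{Step 2: apply CRS.} Since the algorithm sets $S_c = \mathrm{CRS}(R_c, \overrightarrow{x_c})$ and colors the edge in $S_c$ (if any) with color $c$, \Cref{lem:CRS} gives
\begin{equation*}
    \Pr[(u,v_t) \text{ colored } c] \;=\; \Pr[(u,v_t) \in S_c] \;\geq\; x^{(t)}_{uc} \cdot \frac{1 - \prod_{u' \in N(v_t)} (1 - x^{(t)}_{u'c})}{\sum_{u' \in N(v_t)} x^{(t)}_{u'c}}.
\end{equation*}

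\textbf{Step 3: reduce to a one-variable inequality.} Let $s := \sum_{u' \in N(v_t)} x^{(t)}_{u'c}$, so $s \leq 1 + \varepsilon$ by the \good assumption. Using $1 - y \leq e^{-y}$ coordinate-wise, $\prod_{u'} (1 - x^{(t)}_{u'c}) \leq e^{-s}$, hence
\begin{equation*}
    \frac{1 - \prod_{u'} (1 - x^{(t)}_{u'c})}{s} \;\geq\; \frac{1 - e^{-s}}{s}.
\end{equation*}
A standard calculation shows $g(s) := (1 - e^{-s})/s$ is nonincreasing on $s > 0$ (equivalently, $e^{s} \geq 1 + s$), so $g(s) \geq g(1+\varepsilon) = (1 - e^{-1-\varepsilon})/(1+\varepsilon)$.

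\textbf{Step 4: invoke \Cref{lemma:aux_inequality}.} For $\Delta \geq 32 \ln n$, we have $\varepsilon \leq 1$, so \Cref{lemma:aux_inequality} gives $(1 - e^{-1-\varepsilon})/(1+\varepsilon) \geq 1 - e^{-1} - \varepsilon$. Combining the displays,
\begin{equation*}
    \Pr[(u,v_t) \text{ colored } c] \;\geq\; x^{(t)}_{uc} \cdot (1 - e^{-1} - \varepsilon),
\end{equation*}
which is the claim.

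\textbf{Main obstacle.} There is essentially no obstacle: the only substantive point is noticing that the edges' color choices are independent \emph{conditional on the palettes at time $t$}, so that CRS genuinely applies; the remainder is purely the monotonicity of $(1-e^{-s})/s$ together with the already-proved \Cref{lemma:aux_inequality}.
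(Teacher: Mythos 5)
Your proof is correct and follows essentially the same route as the paper: apply the CRS guarantee, bound the product by $e^{-s}$, use monotonicity of $(1-e^{-s})/s$ together with the \good hypothesis $s\leq 1+\varepsilon$, and finish with \Cref{lemma:aux_inequality}. Your Step 1 spells out why the color choices form a product distribution conditional on the palettes, a point the paper leaves implicit but which is indeed the one hypothesis of \Cref{lem:CRS} worth verifying.
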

\begin{proof}
For brevity, we use $x_{uc}$ and $N$ as shorthand for $x^{(t)}_{uc}$ and $N(v_t)$.
Since edge $(u,v_t)$ selects color $c$ at time $t$ independently with probability $x_{uc}$,
by the properties of CRS (\Cref{lem:CRS}) we have that
\begin{align*}\frac{\Pr[(u,v_t) \mathrm{\ colored\ } c]}{x_{uc}} 
\geq \frac{1-\prod_{u\in N} (1-x_{uc})}{\sum_{u\in N} x_{uc}}
\geq  \frac{1-\exp\left(- \sum_{u\in N} x_{uc}\right)}{\sum_{u\in N} x_{uc}} \geq  \frac{1-\exp(-1-\eps)}{1+\eps} \geq (1-e^{-1}-\varepsilon),
\end{align*}
where the second inequality follows from $1-z\leq \exp(-z)$ for real $z$, and the last two inequalities follow from $\sum_{u\in N} x_{uc} \leq 1+\eps$ and monotonicity of $f(z):=(1-\exp(-z))/z$, and by \Cref{lemma:aux_inequality}, respectively.
\end{proof}

If all colors are \good at time $t$, then, since offline nodes' fractional degree is equal to one, $\sum_c x^{(t)}_{uc}=1$, this implies that each edge $(u,v_t)$ is colored with probability close to $1-e^{-1}$, which intuitively (and formally proven in more general settings in \Cref{lem:degree-rate-decrease}) allows us to argue that the maximum degree decreases at a similar rate w.h.p.
Unfortunately, it is quite likely for \emph{some} colors to violate this fractional degree constraint and thus not be \good, at least for \emph{some} adaptive choice of $N(v_t)$.
We will therefore attempt to prove that with high probability (even against an adaptive adversary), \emph{most} colors are \good.

To prove that most colors are \good, i.e., nearly satisfy the fractional degree constraint, we show that no large set of colors $C$ violates this constraint on average, for any set of neighbors $U$ of $v_t$ possibly selected by the adversary. (This allows us to appeal to large deviation inequalities for a number of martingales that we set up shortly.)
The following definition captures this bad event that we wish to avoid. 
\begin{definition} 
We say a pair $(U,C)$ consisting of $\Delta$ offline nodes $U$ and of $\varepsilon \Delta$ colors $C \subseteq \mathcal{C}$ is \emph{\bad at time $t$} if $\sum_{u\in U} \sum_{c\in C} x^{(t)}_{uc} > (1 + \eps)\cdot |C|$. 
\end{definition}
A simple argument implies that if no pairs $(U,C)$ are \bad at time $t$ and $N(v_t)\subseteq U$, then most colors are \good at time $t$.
\begin{observation}\label{obs:no-bad-implies-mostly-good}
    Fix a set $U$ of $\Delta$ offline nodes. If no pair $(U,C)$ is \bad at time $t$, then if $N(v_t)\subseteq U$, then there are at most $\varepsilon \Delta$ colors $c$ which are not \good at time $t$.
\end{observation}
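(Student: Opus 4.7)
The plan is to argue the contrapositive: assume at time $t$ there are strictly more than $\varepsilon\Delta$ colors $c$ that are \emph{not} \good, and exhibit a set $C$ of $\varepsilon\Delta$ colors for which the pair $(U,C)$ is \bad. This will immediately contradict the hypothesis that no pair $(U,C)$ is \bad at time $t$, and the statement follows.

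Concretely, I would pick $C$ to be any subset of $\lceil \varepsilon\Delta\rceil$ colors that are not \good. By the definition of \good, each such $c \in C$ satisfies
\[
\sum_{u \in N(v_t)} x^{(t)}_{uc} \;>\; 1 + \varepsilon.
\]
The key step is then to enlarge the domain of summation from $N(v_t)$ to $U$. Since $x^{(t)}_{uc} = \mathds{1}[c \in P(u)]/|P(u)| \geq 0$ is well-defined for every offline node $u$ (the palettes $P(u)$ are maintained for all offline nodes), and $N(v_t) \subseteq U$ by hypothesis, we have $\sum_{u \in U} x^{(t)}_{uc} \geq \sum_{u \in N(v_t)} x^{(t)}_{uc} > 1 + \varepsilon$ for every $c \in C$. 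Summing this inequality over $c \in C$ yields
\[
\sum_{u \in U}\sum_{c \in C} x^{(t)}_{uc} \;>\; (1+\varepsilon)\,|C|,
\]
which is precisely the definition of $(U,C)$ being \bad at time $t$, giving the desired contradiction.

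There is no genuine obstacle here; the argument is essentially a one-line monotonicity observation dressed as a contrapositive. The only point worth being careful about is the (implicit) convention that the values $x^{(t)}_{uc}$ are defined for \emph{all} offline nodes $u$ via the formula $\mathds{1}[c \in P(u)]/|P(u)|$, not just for $u \in N(v_t)$. This is natural since every offline node maintains a palette throughout the execution, and it is precisely what lets us speak of pairs $(U,C)$ being \bad independently of the (possibly adaptively chosen) neighborhood $N(v_t)$, which is the whole point of the observation: bad events are indexed by potential neighborhoods $U$ rather than by the realized one.
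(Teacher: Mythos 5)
Your proof is correct and follows essentially the same approach as the paper: argue by contradiction (equivalently, contrapositive), use non-negativity of $x^{(t)}_{uc}$ together with $N(v_t)\subseteq U$ to enlarge the inner sum, and conclude a \bad pair exists. The remark about $x^{(t)}_{uc}$ being defined for all offline nodes is a correct and helpful clarification that the paper leaves implicit.
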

\begin{proof}
    Assume by way of contradiction that some set $C'\subseteq \calC$ of $\varepsilon\Delta$ colors $c$ are not \good. Then 
    $$\sum_{c\in C'} \sum_{u\in U} x^{(t)}_{uc} \geq \sum_{c\in C'} \sum_{u\in N(v_t)} x^{(t)}_{uc}  > (1+\varepsilon)\cdot |C'|,$$
    where the first inequality follows from $N(v_t) \subseteq U$ and the latter follows from our assumption regarding $C'$. But then $(U,C')$ is bad at time $t$, and we obtain our desired contradiction.
\end{proof}

The above, together with a simple counting argument below, implies that in order to rule out the existence of many colors that are not \good, we do not need to union bound over all futures, but can union bound over a (quite modest) number of \bad pairs.
\begin{fact}\label{obs:possibly-bad-upper-bound}
The number of pairs $(U,C)$ with $U\subseteq R$ a set of $\Delta$ offline nodes and $C\subseteq \calC$ of $\varepsilon \Delta$ colors is at most ${n \choose \Delta}\cdot {(1+\sqrt{\eps})\Delta \choose \varepsilon \Delta} \leq n^{\Delta}\cdot 4^{\Delta} \leq n^{2\Delta}$.
\end{fact}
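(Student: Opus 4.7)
The plan is to bound the number of such pairs by the product of the number of choices for $U$ and for $C$ independently, then apply standard bounds on binomial coefficients. Since $U$ ranges over $\Delta$-subsets of the $n$ offline nodes, there are precisely $\binom{n}{\Delta}$ choices for $U$. Similarly, $C$ ranges over $\varepsilon\Delta$-subsets of the palette $\calC$ of size $\lceil(1+\sqrt{\varepsilon})\Delta\rceil$, giving $\binom{(1+\sqrt{\varepsilon})\Delta}{\varepsilon\Delta}$ choices for $C$. The first equality in the fact is then immediate by multiplication.

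For the two subsequent inequalities, I would invoke the textbook bounds $\binom{m}{k}\le m^k$ and $\binom{m}{k}\le 2^m$. The first gives $\binom{n}{\Delta}\le n^\Delta$, and the second gives $\binom{(1+\sqrt{\varepsilon})\Delta}{\varepsilon\Delta}\le 2^{(1+\sqrt{\varepsilon})\Delta}\le 2^{2\Delta}=4^\Delta$, where I use that $\sqrt{\varepsilon}\le 1$ for all parameter regimes of interest (since $\varepsilon = 2\sqrt[5]{(\ln n)/\Delta}$ and the theorem assumes $\Delta\ge 32\ln n$, so $\varepsilon$ is small). Multiplying these two bounds yields $n^\Delta\cdot 4^\Delta$. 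Finally, $n^\Delta\cdot 4^\Delta\le n^{2\Delta}$ whenever $n\ge 4$, which holds trivially for any nontrivial instance.

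The proof is a routine counting and binomial-bound computation, with no real obstacle; the only thing to verify carefully is that $\sqrt{\varepsilon}\le 1$ in our parameter regime, which follows directly from the hypothesis on $\Delta$.
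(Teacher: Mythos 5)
Your proof is correct and takes the natural (and essentially only) approach: the paper states this as a \emph{Fact} without proof, and your routine counting argument — $\binom{n}{\Delta}$ choices for $U$, $\binom{\lceil(1+\sqrt{\eps})\Delta\rceil}{\eps\Delta}$ for $C$, then $\binom{m}{k}\le m^k$, $\binom{m}{k}\le 2^m$, and $\sqrt{\eps}\le 1$ (which follows from $\Delta\ge 32\ln n$) — is exactly the intended justification. The only point worth being explicit about, which you handle implicitly, is that $\lceil(1+\sqrt{\eps})\Delta\rceil\le 2\Delta$ (using $\sqrt{\eps}\le 1$ and $\Delta$ integer) so the ceiling does not spoil the $4^{\Delta}$ bound.
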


To union bound over all potential \bad pairs (and indirectly to union bound over all possible neighborhoods of $\Delta$ or fewer neighbors that the adversary may select), we prove that any given pair is \bad at any time $t$ with probability at most, say, $n^{-3\Delta}$.
For this, we consider the following processes -- one for each pair $(U,C)$. For now, we fix a particular pair $(U,C)$.

\begin{definition}
    For the above pair $(U,C)$, denote by $Z_i$ the value $\sum_{u \in U} \sum_{c \in C} \frac{\mathds{1}[c\in P(u)]}{|P(u)|} = \sum_{u \in U} \frac{|C\cap P(u)|}{|P(u)|}$ after exactly $i$ edges have picked a color and removed it from their offline node's palette.
    If the overall number of edges revealed by the adversary, $m$, is less than $n\Delta$, we set $Z_i=Z_m$ for all $i=m+1,\dots,Z_{n\Delta}$.
\end{definition}
\begin{Remark}
We will show that $Z_0, Z_1, \ldots,Z_{m}$ forms a martingale in \cref{lemma:martingale-params}. We note that it is in fact an \textbf{exposure} martingale, with $Z_t = \E\left[\sum_{u\in U}\frac{P(u)\cap C}{P(u)} \text{at the end of the algorithm} \ \middle|\ \text{random choices up to time $t$}\right]$, where the expectation is taken over the random choices by the algorithm.
\end{Remark}

Note that the the adversary determines the number of edges, which is at most $n\Delta$ in a $2n$-node graph ($n$ online and $n$ offline nodes), which has maximum degree $\Delta$.

\begin{lemma}\label{lemma:martingale-params}
    The sequence $Z_0,Z_1,\dots,Z_{n\Delta}$ is a martingale (even against adaptive adversaries), with an initial value $Z_0\leq |C|$, step size $|Z_i-Z_{i-1}|\leq \stepsize$ for all $i$ always and observed variance
    $$\sum_{i=1}^m \E\left[(Z_i-Z_{i-1})^2 \mid Z_0,Z_1,\dots, Z_{i-1}\right]\leq \varsize.$$
\end{lemma}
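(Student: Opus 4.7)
The plan is to verify the four claims in turn; the initial value and step size bounds are direct, the variance is a straightforward calculation, and the only real conceptual point is the martingale property against an adaptive adversary. The initial value is immediate: all palettes equal $\calC$ of size $\lceil(1+\sqrt{\eps})\Delta\rceil$, so $Z_0 = |U|\cdot|C|/\lceil(1+\sqrt{\eps})\Delta\rceil = \Delta\cdot|C|/\lceil(1+\sqrt{\eps})\Delta\rceil \leq |C|$.

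For the remaining parts, I would analyze a single step $i$ in which an edge $e=(u,v_t)$ is processed. If $u\notin U$ the step is trivial (no term of the sum changes), so assume $u\in U$, and write $k:=|P(u)|$ and $a:=|C\cap P(u)|$ just before the step. The algorithm picks $c\in P(u)$ uniformly and removes it: with probability $a/k$ we have $c\in C$ and $u$'s contribution moves from $a/k$ to $(a-1)/(k-1)$, and with probability $(k-a)/k$ it moves from $a/k$ to $a/(k-1)$. A direct computation gives $\E[\text{new contribution}\mid \mathcal{F}_{i-1}] = \frac{a}{k}\cdot\frac{a-1}{k-1}+\frac{k-a}{k}\cdot\frac{a}{k-1}=\frac{a}{k}$, so the conditional expected change is zero. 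The key observation for the adaptive adversary is that the adversary's choice of which edge to expose next is a deterministic function of the history (which includes all of the algorithm's previous random choices), so conditional on $\mathcal{F}_{i-1}$ the only randomness at step $i$ is the algorithm's uniform color pick. Since the expected change is zero for \emph{every} edge the adversary could name, the martingale property holds regardless of the adversary's strategy. For the step size, the change at step $i$ is $(a-k)/(k(k-1))$ if $c\in C$ and $a/(k(k-1))$ otherwise, so $|Z_i-Z_{i-1}|\leq \max(a,k-a)/(k(k-1))\leq 1/(k-1)$. Since $u$ has at most $\Delta$ incident edges, its palette loses at most $\Delta-1$ colors prior to this step, so $k\geq \lceil(1+\sqrt{\eps})\Delta\rceil-(\Delta-1)\geq \sqrt{\eps}\Delta+1$, giving $|Z_i-Z_{i-1}|\leq 1/(\sqrt{\eps}\Delta)\leq \stepsize$.

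For the variance, the same two-case computation yields $\E[(Z_i-Z_{i-1})^2\mid\mathcal{F}_{i-1}] = a(k-a)/(k^2(k-1)^2)\leq 1/(4(k-1)^2)$, using $a(k-a)\leq k^2/4$. I would group the total sum by $u\in U$: for each such $u$, at most $\Delta$ edges are processed and its palette takes distinct values $k\geq\sqrt{\eps}\Delta+1$ on those steps, so the per-$u$ contribution is bounded by $\sum_{j\geq \sqrt{\eps}\Delta}1/(4j^2)\leq 1/(4(\sqrt{\eps}\Delta-1))$. Multiplying by $|U|=\Delta$ gives total variance at most $\Delta/(4(\sqrt{\eps}\Delta-1))\leq 1/(2\sqrt{\eps})$, which is comfortably below the claimed $\varsize$ for any $\eps\leq 16$.

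The only non-routine step is verifying the martingale property against the adaptive adversary, and as outlined above it reduces to the observation that the entire per-step randomness is a single uniform draw from the current palette whose conditional expected effect on $Z$ is zero; because the adversary cannot inject randomness, no union bound over its possible choices at this step is needed, and martingale concentration can later be applied on a pair-by-pair basis in the way anticipated by the surrounding discussion.
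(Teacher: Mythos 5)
Your proof is correct and follows the same structure as the paper's: a per-step analysis of the conditional expectation and variance of $Z_i - Z_{i-1}$ given the current palette state of the relevant node $u\in U$, the observation that the martingale property holds because the per-edge expected change is zero regardless of which edge the (adaptive) adversary names, and a bound on the palette size $|P(u)|\geq \sqrt{\eps}\Delta+1$ from the degree constraint. Your constants are slightly sharper (step size $1/(\sqrt{\eps}\Delta)$ rather than $2/(\sqrt{\eps}\Delta)$, and observed variance $O(1/\sqrt{\eps})$ rather than $2/\eps$, the latter because you sum $1/(k-1)^2$ over the distinct decreasing palette sizes per node instead of using a uniform per-step bound), but the decomposition and key ideas match the paper.
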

\begin{proof}
    By definition, and since $|U|= \Delta$, we have that $Z_0 = \frac{|U||C|}{(1+\sqrt{\varepsilon}) \Delta} \leq |C|$. We turn to proving the remaining claimed properties of this sequence.
    
    First, if $Z_0,Z_1,\dots,Z_{i-1}$ are assigned values (based on our algorithm's randomness and the adaptively-chosen adversarial node arrivals) $z_0,z_1,\dots,z_{i-1}$ such that the $i^{th}$ edge does not contain an offline node in $U$, then clearly $Z_i=Z_{i-1}$. This implies that for such an assignment:
    \begin{align*}
    \E[Z_i & \mid Z_0=z_0,Z_1=z_1\dots,Z_{i-1}=z_{i-1}] = Z_i,\\
    \E[(Z_i-Z_{i-1})^2 & \mid Z_0=z_0,Z_1=z_1\dots,Z_{i-1}=z_{i-1}] = 0.
    \end{align*}
    This case is therefore consistent with the sequence being a martingale of step size $|Z_i-Z_{i-1}|\leq \stepsize $, and contributes nothing to the observed variance $\sum_{i=1}^m \E\left[(Z_i-Z_{i-1})^2 \mid Z_0,Z_1,\dots, Z_{i-1}\right]$. We may safely focus on the complementary case where $Z_0,Z_1,\dots,Z_{i-1}$ are assigned values $z_0,z_1,\dots,z_{i-1}$, such that the $i^{th}$ edge does contain an offline node $u\in U$. 
    
    Let $P(u)$ and $P_{\text{new}}(u)$ be the palette of $u$ just before and just after the $i^{th}$ edge is processed by \Cref{alg:adaptive-edge-coloring}. So, $|P_{\text{new}}(u)| = |P(u)| - 1$. As conditioning on any realization $Z_0=z_0, Z_1=z_1,\dots,Z_{i-1}=z_{i-1}$ may not provide a closed form for $|P(u)|$ or $|C\cap P(u)|$, we further refine the conditioning, as follows.
    For any non-negative integers $k$ and $\ell \leq k$, let $A(k,\ell)$ be the event that $Z_0=z_0, Z_1=z_1,\dots,Z_{i-1}=z_{i-1}$ and moreover $|P(u)| = k$ and $|P(u) \cap C| = \ell$. (Notice that this event implies that $|P_{\text{new}}(u)| = k-1$.) By definition of $Z_i,Z_{i-1}$, we therefore obtain:
    \begin{align*}
        \E \left[Z_i - Z_{i-1} \mid  A(k,\ell)\right] &
        = \frac{\E[|P_{\text{new}}(u) \cap C|]}{k-1} - \frac{\ell}{k} 
    \end{align*}
    If the color $c$ picked uniformly at random from $P(u)$ by edge $(u,v_t)$ in \Cref{alg:adaptive-edge-coloring} is in $C$, which happens with probability $\frac{\ell}{k}$, then we have that $|P_{\text{new}}(u) \cap C| = \ell - 1$; otherwise, $c \notin C$, and $|P_{\text{new}}(u) \cap C| = \ell$. Overall, we obtain:
    \begin{equation*}
        \E \left[Z_i - Z_{i-1} \mid  A(k,\ell)\right] = \frac{\ell}{k} \cdot \frac{\ell - 1}{k - 1} + \left(1 - \frac{\ell}{k} \right) \cdot \frac{\ell}{k - 1} - \frac{\ell}{k} = 0.
    \end{equation*}
    This equality holds for all the events $A(k,\ell)$ partitioning the event $[Z_0=z_0, Z_1=z_1,\dots, Z_{i-1}=z_{i-1}]$. Hence, by total expectation, we conclude that $Z_0,Z_1,\dots,Z_{n\Delta}$ is a martingale, since for all $z_0,z_1,\dots,z_{i-1}$,
    $$\E[Z_i - Z_{i-1} \mid Z_0=z_0, Z_1=z_1,\dots, Z_{i-1}=z_{i-1}] = 0.$$
    
    Next, similar calculations allow us to bound the contribution of the $i^{th}$ edge containing an offline node $u\in U$ to the observed variance. Again, conditioning on an event $A(k, \ell)$, one can compute:
    \begin{align*}
    \E[(Z_i-Z_{i-1})^2 \mid  A(k,\ell)] & =  \left(\frac{\ell}{k-1} - \frac{\ell}{k}\right)^2 \cdot \left(1-\frac{\ell}{k}\right) + \left(\frac{\ell-1}{k-1}-\frac{\ell}{k}\right)^2 \cdot \frac{\ell}{k}\\
    & \leq \left(\frac{\ell}{k(k-1))}\right)^2 + \left(\frac{\ell}{k(k-1))} - \frac{1}{k-1} \right)^2 \\
    & \leq \frac{2}{(k-1)^2} \\
    & \leq \frac{2}{\varepsilon\cdot \Delta^2}.
    \end{align*}
    Above, the first inequality simply upper bounded both probabilities by one. The penultimate inequality relies on the fact that $\ell = |C\cap P(u)|\leq |P(u)|=k$. Finally, the ultimate inequality relies on $u$ having at most $\Delta-1$ edges revealed before the $i^{th}$ edge (since this edge contains $u$), and hence before our random choices for this edge, $k=|P(u)|=\Delta(1+\sqrt{\varepsilon})-\text{deg}(v)\geq \sqrt{\varepsilon}\Delta+1$. By total expectation, the above implies the following bound on the contribution of the $i^{th}$ edge (if it contains an offline node $u\in U$) to the observed variance: 
    $$\E[(Z_i-Z_{i-1})^2 \mid  Z_0,Z_1,\dots,Z_{i-1}] \leq \frac{2}{\varepsilon\cdot \Delta^2}.$$
    But as the adversary can reveal at most $\Delta^2$ many edges containing an endpoint in $U$, which we recall contains at most $\Delta$ nodes, we obtain the following upper bound on the observed variance:
    $$\E\left[\sum_i (Z_i - Z_{i-1})^2 \;\middle\vert\; Z_0,Z_1,\dots,Z_{i-1}\right] \leq \Delta^2 \cdot \frac{2}{\eps\cdot \Delta^2} = \varsize.$$
    
    It remains to upper bound the step size $|Z_i - Z_{i-1}|$ when the $i^{th}$ edge contains an offline node $u$ in $U$. (We recall that if not, then $Z_i=Z_{i-1}$.) 
    We note that for this $u$ we have $|P_{\text{new}}(u)| = |P(u)| - 1$, that is, there is at precisely one color $c \in P(u)\setminus  P_{\text{new}}(u)$. If this color is in $C$, then we have that $|C\cap P_{\text{new}}(u)|=|C\cap P(u)|-1\leq |P(u)|-1$, and so, 
    \begin{align*}
         |Z_i - Z_{i-1}| \leq \frac{1}{|P(u)|} + |C\cap P_{\text{new}}(u)|\cdot \left(\frac{1}{|P(u)|} - \frac{1}{|P(u)|-1}\right) \leq \frac{2}{|P(u)|}\leq \frac{2}{|P(u)|-1}.
    \end{align*}
    Similarly, if $c$ is not in $C$, then we have that $|C\cap P_{\text{new}}(u)|=|C\cap P(u)|$, and so
    \begin{align*}
         |Z_i - Z_{i-1}| \leq |C\cap P_{\text{new}}(u)|\cdot \left(\frac{1}{|P(u)|} - \frac{1}{|P(u)|-1}\right) \leq \frac{1}{|P(u)|-1}\leq \frac{2}{|P(u)|-1}.
    \end{align*}
    But since before the $i^{th}$ edge is processed $|P(u)|\geq (1+\sqrt{\varepsilon})\Delta-(\Delta-1) = \sqrt{\varepsilon}\Delta+1$, we then have that
    \begin{align*}
        |Z_i - Z_{i-1}| & \leq \frac{2}{|P(u)|-1}\leq \frac{2}{\sqrt{\varepsilon}\Delta},
    \end{align*}
    which finishes the proof.
\end{proof}

We are now ready to prove that for the given pair $(U,C)$ we fixed, the probability the pair is \bad is polynomially smaller than the number of potentially bad pairs, which will momentarily be useful when union bounding over these.

\begin{lemma} \label{lemma:martingale_magic}
For the above pair $(U,C)$ and fixed time $t$, even for an adaptive adversary
\begin{align*}
\Pr[(U,C) \textrm{\ is\ \bad\ at\ time\ } t] \leq n^{-5\Delta+2}.
\end{align*}
\end{lemma}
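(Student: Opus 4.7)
The plan is to apply Freedman's inequality (\Cref{thm:freedman_inequality}) to the martingale $Z_0, Z_1, \ldots, Z_{n\Delta}$ associated with the fixed pair $(U,C)$, using the parameters established in \Cref{lemma:martingale-params}: initial value $Z_0 \leq |C| = \varepsilon\Delta$, step-size bound $A = 2/(\sqrt{\varepsilon}\Delta)$, and observed-variance bound $\sigma^2 = 2/\varepsilon$. Crucially, all of these hold even against an \emph{adaptive} adversary, which is exactly what lets us dodge the usual difficulty of having to union bound over future adversarial choices.

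First I would translate the bad event into a statement about $Z$: by construction, the quantity $\sum_{u\in U}\sum_{c\in C} x^{(t)}_{uc}$ appearing in the definition of \bad equals $Z_{i^*}$, where $i^*$ denotes the (random) number of edges processed by the algorithm before $v_t$ arrives. Since $Z_0 \leq |C|$, the event that $(U,C)$ is \bad at time $t$ implies $Z_{i^*} - Z_0 > \varepsilon |C| = \varepsilon^2\Delta$. Because $i^*$ is a stopping time rather than a deterministic index, I would apply Freedman's inequality not to $Z$ itself but to the stopped martingale $\tilde Z_i := Z_{\min(i, i^*)}$, which standardly inherits the step-size and observed-variance bounds of $Z$ and satisfies $\tilde Z_{n\Delta} = Z_{i^*}$.

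With $\lambda := \varepsilon^2\Delta$ we have $\lambda^2 = \varepsilon^4\Delta^2$, while
\[
2\bigl(\sigma^2 + A\lambda/3\bigr) \;=\; \frac{4}{\varepsilon} + \frac{4\varepsilon^{3/2}}{3} \;\leq\; \frac{6}{\varepsilon}
\]
for $\varepsilon \leq 1$ (which holds since $\Delta \geq 32\ln n$). Freedman's inequality then yields an exponent of at least $\varepsilon^5 \Delta^2 / 6$; substituting $\varepsilon^5 = 32\ln n/\Delta$ (from the chosen $\varepsilon = 2\sqrt[5]{\ln n/\Delta}$) gives an exponent of at least $\tfrac{16}{3}\Delta \ln n > (5\Delta - 2)\ln n$, so $\Pr[(U,C) \text{ is \bad at time } t] \leq n^{-5\Delta + 2}$ as claimed.

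The key conceptual step is recognizing that the bad event can be phrased as a one-sided deviation of $Z$ evaluated at a stopping time; everything else is an essentially mechanical plug-in using the parameters already given by \Cref{lemma:martingale-params}. The one mild technical obstacle is that \Cref{thm:freedman_inequality} as stated only bounds $Z_m - Z_0$ at a deterministic horizon $m$, whereas we need the deviation at the (adversarially controlled) stopping time $i^*$; passing to the stopped martingale $\tilde Z$ resolves this cleanly while preserving all of Freedman's hypotheses.
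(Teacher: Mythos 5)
Your proof is correct and arrives at the same bound, but takes a genuinely different route for the one non-mechanical step: handling the fact that the relevant martingale index is random. The paper handles this with a crude union bound over all deterministic indices $i\in[n\Delta]$, i.e., it bounds $\Pr[(U,C)\text{ \bad at time }t]\leq \sum_{i} \Pr[Z_i - Z_0 > \lambda]$ and applies Freedman's inequality to each fixed $i$, paying an extra factor of $n^2$ which it then absorbs into the final exponent. You instead observe that the index $i^*$ is a bounded stopping time (with respect to the natural filtration of the process, which is the filtration the martingale property in \Cref{lemma:martingale-params} is actually proved against), pass to the stopped martingale $\tilde Z_i = Z_{\min(i,i^*)}$, and apply Freedman once at the horizon $n\Delta$. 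Both arguments are valid; yours is cleaner and avoids the $n^2$ overhead, which is why you end up with strict inequality $\exp(-16\Delta\ln n/3) < n^{-5\Delta+2}$ rather than the paper's tight $n^2\cdot\exp(-5\Delta\ln n)=n^{-5\Delta+2}$. Two minor points worth keeping in mind: (i) you should fix a convention (e.g., $i^*=n\Delta$) for the case where $v_t$ never arrives, so that $i^*$ is a well-defined bounded stopping time, though in that case the bad event is vacuous anyway; (ii) the measurability of $\{i^*\le i\}$ relies on the full history filtration, not merely the $\sigma$-algebra generated by $Z_0,\dots,Z_i$ — but this is exactly the filtration in which \Cref{lemma:martingale-params} establishes the martingale property (by conditioning on the events $A(k,\ell)$), so your appeal to that lemma is sound.
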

\begin{proof}
If $i$ is the number of edges revealed until time $t$, then $\sum_{u\in U}\sum_{c\in C} x^{(t)}_{uc} = \sum_{u\in U} \frac{\mathds{1}[c\in P(u)]}{|P(u)|} = Z_i$, with $P(u)$ the palette of $u$ just before time $t$.
Therefore we see that $(U,C)$ is not \bad at time $t$ if $Z_i\leq (1+\varepsilon)\cdot |C|$ for all $i\in [m]$ (note that $i$ is possibly random). We therefore wish to upper bound the deviation of $Z_i$ from $Z_0$ for all $i\in [n\Delta]$, from which we get
\begin{align*}
    \Pr[(U,C) \textrm{\ is \bad\ at\ time } t] \leq \sum_i \Pr[Z_i > (1+\eps)\cdot |C|] \leq \sum_i \Pr[Z_i - Z_0 > \eps\cdot |C|],
\end{align*}
where we recall that $Z_0 < |C|$.
We upper bound the latter terms using Freedman's inequality, applied to each of the martingale states $Z_0,Z_1,\dots,Z_{n\Delta}$, which we recall from \Cref{lemma:martingale-params} has step size at most $A:=\stepsize$ and observed variance at most $\sigma^2:=\varsize$. Since we are interested in a deviation of $\lambda:=\varepsilon\cdot |C| = \eps^2 \Delta$ for $Z_i$ from $Z_0$, from Freedman's inequality (\Cref{thm:freedman_inequality}), we obtain for any $t = 1,\dots,n\Delta$:
    \begin{align*}
        \Pr[(U,C) \textrm{\ is \bad\ at\ time } t] 
        & \leq \sum_i \Pr[Z_i - Z_0 \geq  \lambda] \\
        & \leq \sum_i \exp\left(-\frac{\lambda^2}{2(\sigma^2 + A\lambda/3)}\right) 
        \\ 
        & \leq \sum_i \exp\left(-\frac{\eps^4\Delta^2}{2(\varsize + ((\varepsilon^2 \Delta)\cdot \stepsize)/3)}\right) \\
        & \leq \sum_i \exp(-\eps^5 \Delta^2 / 6) \\
        & \leq n^2\cdot \exp(-\eps^5 \Delta^2 / 6) \\
        & \leq n^2 \cdot \exp(-5\Delta \ln n)\\
        & = n^{-5\Delta+2},
    \end{align*}
    where the last inequality relied on $\eps=2\sqrt[5]{\frac{\ln n}{\Delta}}$ implying $\eps^5 > 30 {\frac{\ln n}{\Delta}}$.
\end{proof}

\begin{corollary} \label{lemma:martingale_cor}
With probability $1-n^{-6}$, at all times, at most~$\eps\cdot \Delta$ many colors are not \good, and this holds even against an adaptive adversary.
\end{corollary}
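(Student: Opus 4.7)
The plan is to assemble the three preceding ingredients — Observation~\ref{obs:no-bad-implies-mostly-good}, Fact~\ref{obs:possibly-bad-upper-bound}, and Lemma~\ref{lemma:martingale_magic} — via a single union bound. Concretely, I would fix an arbitrary time $t \in [n]$ and argue by contrapositive: if \emph{more than} $\eps \Delta$ colors are not \good at time $t$, then choosing any set $U$ of exactly $\Delta$ offline nodes with $N(v_t) \subseteq U$ (padding arbitrarily, which is possible since $|N(v_t)| \leq \Delta$) and invoking Observation~\ref{obs:no-bad-implies-mostly-good} forces the existence of a color set $C$ of size $\eps \Delta$ making $(U, C)$ \bad at time $t$. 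Thus, ruling out the existence of \emph{any} \bad pair at \emph{any} time suffices.

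Next, I would apply a union bound over all triples $(t, U, C)$. Fact~\ref{obs:possibly-bad-upper-bound} provides at most $n^{2\Delta}$ choices of $(U, C)$, there are $n$ choices of $t$, and Lemma~\ref{lemma:martingale_magic} bounds the bad-event probability for each triple by $n^{-5\Delta+2}$, yielding a total failure probability of at most
$$ n \cdot n^{2\Delta} \cdot n^{-5\Delta+2} \;=\; n^{3 - 3\Delta}, $$
which is at most $n^{-6}$ for any $\Delta \geq 3$ (well within the regime $\Delta \geq 32 \ln n$ assumed in Theorem~\ref{thm:adaptive-subroutine}). Complementing this event, at every $t$ there is no \bad pair, and Observation~\ref{obs:no-bad-implies-mostly-good} delivers the conclusion.

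The main subtlety — which was already absorbed into Lemma~\ref{lemma:martingale_magic} — is that we must union bound over all possible $(U, C)$ in advance, rather than only over the pair that the adversary induces, since the adversary chooses $N(v_t)$ \emph{based on} the algorithm's realized randomness. This is legitimate precisely because the per-pair bound in Lemma~\ref{lemma:martingale_magic} was proved pointwise via Freedman's inequality applied to an exposure martingale whose step-size and variance bounds hold conditional on the history. Consequently, there is no genuine obstacle remaining in this corollary: it is a routine packaging step, and the one thing to double-check is that the exponent $-5\Delta+2$ from Lemma~\ref{lemma:martingale_magic} is large enough (in absolute value) to absorb both the $n^{2\Delta}$ enumeration of pairs and the factor of $n$ from taking a union over times, with polynomial slack to spare for the desired $n^{-6}$.
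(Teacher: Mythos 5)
Your proof is correct and follows essentially the same route as the paper: union bound over all $(t,U,C)$ using Fact~\ref{obs:possibly-bad-upper-bound} for the count, Lemma~\ref{lemma:martingale_magic} for the per-pair probability, and Observation~\ref{obs:no-bad-implies-mostly-good} (with $N(v_t)$ padded up to a $\Delta$-set $U$) to translate back to the ``few non-\good{} colors'' statement. The only cosmetic difference is that you spell out the padding and the contrapositive, which the paper leaves implicit.
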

\begin{proof}
    By \Cref{obs:possibly-bad-upper-bound}, there are at most $n^{2\Delta}$ many possibly bad pairs $(U,C)$ for any fixed time $t$. Therefore, taking a union bound over all pairs and times $t$, we have from \Cref{lemma:martingale_magic} and $\Delta\geq 2$ that 
    $$\Pr[\textrm{some\ pair\ $(U,C)$\ is\ \bad\ for\ some\ time\ $t$}] \leq n^{2\Delta+1}\cdot n^{-5\Delta+2} \leq n^{-6},$$
    where the second inequality used that $\Delta\geq 30 \ln n\geq 3$.
    The corollary then follows by \Cref{obs:no-bad-implies-mostly-good}.
\end{proof}

So far we have established that w.h.p., for every offline node $u$, among the at least $\sqrt{\eps}\cdot \Delta$ colors $c$ still in $u$'s palette $P(u)$ at any given time, at most $\eps\cdot \Delta \ll \sqrt{\eps}\cdot \Delta$ are \bad. The above hints at each node (both offline and online) having each of its edges colored with probability roughly $1-e^{-1}$. More formally, we can show the following.

\begin{lemma}\label{lem:degree-rate-decrease}
Let $H\subseteq G$ be the subgraph of $G$ colored by \Cref{alg:adaptive-edge-coloring}, for $G$ a bipartite graph of maximum degree $\Delta(G)\leq \Delta$, for $\Delta\geq 32\cdot \ln n$.
Then, $\Delta(G\setminus H)\leq (e^{-1}+3\sqrt{\varepsilon})\cdot \Delta$ with probability at least $1-n^{-5}$, even against an adaptive adversary.
\end{lemma}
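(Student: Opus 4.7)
The plan is to combine \Cref{lemma:martingale_cor}, which guarantees that most colors are $\varepsilon$-good at every time step w.h.p., with per-node concentration of the number of uncolored edges, then union bound over nodes. Let $E$ denote the event from \Cref{lemma:martingale_cor}, so $\Pr[E]\ge 1-n^{-6}$. Under $E$, for every $t$ and every $u\in N(v_t)$, since $|P(u)|\ge \sqrt{\varepsilon}\Delta$ and at most $\varepsilon\Delta$ colors are not \good, the \bad colors contribute mass at most $\varepsilon\Delta/|P(u)|\le \sqrt{\varepsilon}$ to $\sum_c x^{(t)}_{uc}=1$. Summing the per-color guarantee of \Cref{lem:per-edge-coloring-prob} over $\varepsilon$-good colors then gives, even conditional on the history $\mathcal{H}_t$ up to and including the arrival of $v_t$,
\[
\Pr\!\left[(u,v_t)\text{ colored}\mid \mathcal{H}_{t}\right]\;\ge\;(1-e^{-1}-\varepsilon)(1-\sqrt{\varepsilon})\;\ge\;1-e^{-1}-2\sqrt{\varepsilon}.
\]

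For an offline node $u$ with $k\le \Delta$ incident edges arriving at times $t_1<\dots<t_k$, I set $Y_i=\mathds{1}[(u,v_{t_i})\text{ uncolored}]$. To decouple from the global conditioning on $E$, I introduce the stopping time $\tau=\min\{t:\text{some pair }(U,C)\text{ is }\varepsilon\text{-bad at time }t\}$; this is $\mathcal{H}_{t-1}$-measurable since pair-badness at time $t$ depends only on palettes prior to time $t$. Working with $\tilde Y_i=Y_i\cdot \mathds{1}[\tau>t_i]$, on $\{\tau>t_i\}$ no pair is bad at time $t_i$ so by \Cref{obs:no-bad-implies-mostly-good} most colors are \good at time $t_i$, yielding $\E[\tilde Y_i\mid \mathcal{H}_{t_i}]\le e^{-1}+2\sqrt{\varepsilon}$ unconditionally. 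Applying Freedman's inequality (\Cref{thm:freedman_inequality}) to the martingale $M_i=\sum_{j\le i}(\tilde Y_j-\E[\tilde Y_j\mid \mathcal{H}_{t_j}])$ with $\lambda=\sqrt{\varepsilon}\Delta$, variance bound $\sigma^2\le \Delta$, and step size $A\le 1$ gives deviation probability $\exp(-\Omega(\varepsilon \Delta))\le n^{-8}$, since $\varepsilon\Delta\ge (32\ln n)^{1/5}\Delta^{4/5}\ge 32\ln n$ under the assumption $\Delta\ge 32\ln n$ and the choice of $\varepsilon$. On $E$ the stopped and original sequences coincide, so the uncolored count at $u$ is at most $(e^{-1}+3\sqrt{\varepsilon})\Delta$ w.h.p.

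For an online node $v_t$ all randomness relevant to its incident edges occurs in a single step, and the color choices $c(u,v_t)$ for $u\in N(v_t)$ are mutually independent, each drawn uniformly from its own palette $P(u)$. Since \Cref{lem:CRS} colors exactly one edge per nonempty $R_c$, the number of colored edges at $v_t$ equals the number of distinct chosen colors, which is a function of the independent color choices with bounded differences $1$. McDiarmid's inequality therefore gives deviation of at most $\sqrt{\varepsilon}\Delta$ below its expectation with failure probability $\le \exp(-2\varepsilon\Delta^2/d(v_t))\le \exp(-2\varepsilon\Delta)\le n^{-8}$. Summing the per-edge bound over $u\in N(v_t)$ shows that on $\{\tau>t\}$ this expectation is at least $(1-e^{-1}-2\sqrt{\varepsilon})d(v_t)$, so the uncolored count at $v_t$ is at most $(e^{-1}+2\sqrt{\varepsilon})d(v_t)+\sqrt{\varepsilon}\Delta\le (e^{-1}+3\sqrt{\varepsilon})\Delta$ w.h.p. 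A final union bound over the $2n$ nodes together with $\Pr[\neg E]\le n^{-6}$ yields total failure probability at most $2n\cdot n^{-8}+n^{-6}\le n^{-5}$.

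The main obstacle I anticipate is precisely the interaction between the global event $E$ and per-node concentration, since $E$ depends on the full execution and naively conditioning on it can destroy the martingale/independence structure. Using the stopping time $\tau$ built from the $(U,C)$-pair framework of \Cref{lemma:martingale_magic} resolves this cleanly: $\tau$ is $\mathcal{H}_{t-1}$-measurable so the stopped variables retain the right conditional structure, their conditional means are bounded unconditionally, and on $E$ (where $\tau=\infty$) the stopped process agrees with the original.
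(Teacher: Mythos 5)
Your proof is correct and follows the same overall structure as the paper's: invoke \Cref{lemma:martingale_cor} to get that at most $\eps\Delta$ colors are not \good at all times w.h.p., combine with \Cref{lem:per-edge-coloring-prob} to lower-bound the per-edge coloring probability by $1-e^{-1}-2\sqrt{\eps}$, prove per-node concentration of the uncolored degree, then union bound over nodes. Where you differ is in how you decouple the global good event from the per-node concentration. The paper writes ``subject to this good event'' and then asserts $\Pr[Y_i=1\mid Y_1,\dots,Y_{i-1}]\leq p$, invoking a coupling to a binomial and a Chernoff bound; it leaves implicit the fact that conditioning on a full-execution event like $E$ can in principle distort the conditional distributions of the $Y_i$'s. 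Your stopping-time construction $\tau$ (taking the first time some pair $(U,C)$ becomes \bad, measurable before the random choices at that time step) makes this step rigorous: the truncated variables $\tilde Y_i$ have the desired conditional-mean bound \emph{unconditionally}, and on $E$ (i.e., $\tau=\infty$) they coincide with $Y_i$. Applying Freedman's inequality to $M_i=\sum_{j\le i}(\tilde Y_j-\E[\tilde Y_j\mid\mathcal{H}_{t_j}])$ with $A\leq 1$, $\sigma^2\leq\Delta$, $\lambda=\sqrt{\eps}\Delta$ correctly yields $\exp(-\Theta(\eps\Delta))\leq n^{-8}$, since indeed $\eps\Delta = 2\ln^{1/5}n\cdot\Delta^{4/5}\geq 2\cdot 32^{4/5}\ln n = 32\ln n$ when $\Delta\geq 32\ln n$. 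For online nodes you substitute McDiarmid's bounded-differences inequality (changing one of $v_t$'s $\leq\Delta$ independent color choices changes the count of distinct selected colors by at most one) for the paper's negative-correlation balls-and-bins concentration; these give bounds of the same quality and both are valid here. In short, the argument is sound; your handling of the conditioning via the stopping time is cleaner than the paper's treatment of the same step.
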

\begin{proof}
To prove the claim, it suffices to show that all nodes of degree at least $\Delta/e$ in the (adaptive) graph $G$ have at most a $(e^{-1} + 5
\sqrt{\eps})\Delta$ of their edges not colored by \Cref{alg:adaptive-edge-coloring}.
By \Cref{lemma:martingale_cor}, with probability at least $1-n^{-3}$, at any time $t$ at most $\varepsilon \Delta$ colors are not \good. 
We show that subject to this good event, no 
node has more than $(e^{-1}+3\sqrt{\eps})\cdot \Delta$ many edges in $G\setminus H$ with high probability, and so the lemma will follow by union bound over the bad events that some large set of colors is not \good or that some node has degree greater than the above in $G\setminus H$.

Now, consider a time-step $t$. By the above, at most $\varepsilon \Delta$ colors are not \good at time $t$.
On the other hand, each neighbor $u$ of $v_t$ has at least $\sqrt{\eps}{\Delta}$ many of the original palette of $|\calC|=(1+\sqrt{\eps})\Delta$ many colors still in $P(u)$, which decreased as the latter decreased in size by at most $\Delta$ when inspecting the previous edges of $u$.
Therefore, we have that
\begin{equation*}
    \sum_{\textrm{ \good\ }c} x^{(t)}_{uc} 
    \geq 1-\frac{|\{c \mid c \textrm{\ is \ not \good}\}|}{\sqrt{\varepsilon}\Delta} \geq 1-\sqrt{\eps}.
\end{equation*} 
This, together with \Cref{lem:per-edge-coloring-prob}, implies that for all time steps that $u$ belongs to, we have that
\begin{align}\label{eqn:edge-colored-prob-when-good}
\Pr[(u,v_t) \mathrm{\ colored}]\geq \sum_{\textrm{ \good\ }c} x^{(t)}_{uc}\cdot (1-e^{-1}-\eps) \geq (1-\sqrt{\eps})\cdot (1-e^{-1}-\eps) \geq (1-e^{-1}-2\sqrt{\eps}).
\end{align}
Denoting by $Y_i$ an indicator for the $i^{th}$ edge of $u$ not being colored, and denoting by $p=e^{-1}+2\sqrt{\eps}$, we have shown that $\Pr[Y_i=1]\leq p$. Indeed, we have proven the stronger claim that  
$$\Pr[Y_i=1 \mid Y_1,\dots,Y_{i-1}]\leq p.$$
Thus, by standard coupling arguments, and since $u$ has at most $\Delta$ edges chosen by the adversary, we have for $q:=p+\sqrt{\eps}$ that
\begin{align*}
\Pr[\deg_{G\setminus H}(u) \geq q \Delta] & =  \Pr\left[\sum_{i=1}^{\Delta} Y_i \geq q\Delta\right]  \leq \Pr[\Ber(\Delta,p) \geq q \Delta] \leq \exp\left(-\frac{\eps\Delta}{3}\right) \leq n^{-10},
\end{align*}
where the last two inequalities follow from Chenoff-Hoeffding bounds and the lemma's hypothesis whereby $\Delta\geq 32\cdot \ln n$, and so $\eps\Delta = 2\Delta^{4/5}\ln^{1/5}n \geq 32\cdot \ln n$.
Thus, by union bound, we have that with high probability no offline node has more than $q\Delta = (e^{-1}+3\sqrt{\eps})$ many uncolored edges after running \Cref{alg:adaptive-edge-coloring}, provided at most $\eps\cdot \Delta$ colors are not \good at each time $t$.
We turn to prove the same for online nodes.

Consider some online node $v_t$.
By \Cref{eqn:edge-colored-prob-when-good} and linearity of expectation, we have that
\begin{align*}
    \E[\; |\{(u,v_t) \mathrm{\ colored}\}|\;] = \sum_{u\in N(v_t)} \Pr[(u,v_t) \mathrm{\ colored}] \geq \deg(v_t)\cdot (1-e^{-1}-2\sqrt{\eps}).
\end{align*}
Now, recall that by \Cref{lem:CRS}, whenever $c$ is selected by at least one edge $(u,v_t)$ (i.e., $R_c \neq \emptyset$ at time $t$) we assign it to one (unique) such selecting edge of $t$. Therefore, the number of selected colors $C'$ at time $t$ is precisely equal to the number of edges of $v_t$ colored, which is precisely the number of occupied bins in an (asymmetric) balls and bins process: each ball (node) $u$ lands in (selects) bin (color) $c$ independently with probability $x^{(t)}_{uc}$.\footnote{Note that the adversary can select the $x$'s, and the ``independently'' here is with respect to the fresh randomness, given the realized values of $x^{(t)}_{uc}$.}
But by known results in negative correlation \cite{dubhashi1996balls}, the number of occupied bins in a balls and bins process admits Chernoff-Hoeffding type concentration inequalities, and so we obtain the following:
\begin{align*}
    \Pr\Big[ |C'| \leq \E[|C'|] - \sqrt{\varepsilon} \cdot \Delta\Big] \leq \exp\left(-\frac{(\sqrt{\varepsilon}\Delta)^2}{\sum_{u\in N(v_t)} 1^2}\right) = \exp\left(-\frac{\eps \Delta}{1+\sqrt{\eps}}\right) 
    \leq \exp\left(-\sqrt[5]{\frac{\ln n}{ \Delta}}\Delta \right)
    \leq n^{-16},
\end{align*}
where the penultimate inequality relied on $\eps = 2\sqrt[5]{\frac{\ln n}{\Delta}}\leq 1$ and the ultimate inequality relied on $\Delta\geq 32\cdot \ln n$ implying $\Delta^{4/5}\geq 16\cdot \ln^{4/5} n$.
Thus, with probability at least $1-n^{-16}$, online node $v_t$ has a number of selected colors, and hence colored edges, at least $\deg(v_t)\cdot (1-e^{-1}-2\sqrt{\eps})-\sqrt{\eps}\Delta$.
Consequently the number of uncolored edges of $v_t$ is at most $\deg(v_t)\cdot (e^{-1}+2\sqrt{\eps})+\sqrt{\eps}\Delta\leq (e^{-1}+3\sqrt{\eps})\Delta$, with the same probability. Taking union bound over all online nodes and using $n\geq 3$ concludes the proof.
\end{proof}

Given the above, we are now ready to conclude our analysis of \Cref{alg:adaptive-edge-coloring}, by proving \Cref{thm:adaptive-subroutine}, restated below for ease of reference.
\adaptivesubroutine*
\begin{proof}
For each edge $e=(u,v_t)$, \Cref{alg:adaptive-edge-coloring} either leaves the edge uncolored, or it assigns it color $c(e)$, which is removed from $P(u)$ and hence never assigned to an edge of $u$ again. 
Thus, no offline node $u$ has two of its edges assigned the same color.
On the other hand, each online node $v_t$ has at most one edge assigned any fixed color $c$, since for each color we run CRS to select at most one edge $(u,v_t)$ to color $c$.
Thus, \Cref{alg:adaptive-edge-coloring} properly edge colors some subgraph of $G$.
The theorem then follows from \Cref{lem:degree-rate-decrease}.
\end{proof}

\subsection{From Partial Coloring to Edge Coloring}\label{sec:partial-to-full}

\Cref{thm:adaptive-subroutine} implies that, using roughly $\Delta$ colors we can decrease the maximum degree by roughly a factor of $e$. This allows us to pipeline a number of such invocations of \Cref{alg:adaptive-edge-coloring}, using roughly $\Delta\cdot (1+e^{-1}+e^{-2}+\dots)\leq \Delta\cdot \big(\frac{e}{e-1}+o(1)\big)$ many colors to obtain an uncolored subgraph of maximum degree some $o(\Delta)$, which we can color greedily using a further $o(\Delta)$ colors. This brings rise to \Cref{thm:adaptive-algo}, restated here for ease of reference.

\adaptivealgo*
\begin{proof}
     Let $G$ be the input graph (revealed online), and $G_1 \subseteq G$ be the subgraph of $G$ containing the uncolored edges of $G$ after running \Cref{alg:adaptive-edge-coloring} on $G$. 
     By \Cref{thm:adaptive-subroutine}, the maximum degree of $G_1$ is upper bounded by $\Delta_1 := (e^{-1} + 3\sqrt{\varepsilon})\cdot \Delta$ with high probability. 
     We apply \Cref{alg:adaptive-edge-coloring} again, this time on $G_1$, using $\Delta_1$ as (an upper bound on) its maximum degree. This leaves us with another subgraph $G_2 \subseteq G_1$ of uncolored edges, etc. We repeatedly apply this procedure, obtaining a sequence of subgraphs $G \supseteq G_1 \supseteq G_2 \supseteq \dots$, where  every application of \Cref{alg:adaptive-edge-coloring} uses a fresh palette of colors, which by \Cref{thm:adaptive-subroutine} decreases the residual graph's maximum degree by a constant factor (w.h.p.). 
     We stop when the maximum degree is roughly $\Delta^{\alpha} \ln^{1-\alpha} n$ (w.h.p.), for $\alpha=\frac{10}{11}$, and then simply apply the greedy $2$-competitive algorithm on the remaining graph. 
    We note that the resulting algorithm can be implemented online, as showcased in \Cref{alg:pipelining-algorithm}.

     \begin{algorithm}[htb]
    	\caption{Online Edge Coloring}
    	\label{alg:pipelining-algorithm}
    	\begin{algorithmic}[1]
            \State \textbf{Input parameter:} $G$ input graph arriving online, $\Delta$ maximum degree of $G$.

            \State Launch executions $0,\dots,f-1$ of \Cref{alg:adaptive-edge-coloring} with parameters $\Delta_0,\dots,\Delta_{f-1}$ and distinct palettes.
    		\For{\textbf{each} online node $v_t$ on arrival}
                \State $E_t \gets $ edges incident to $v_t$.
    			\For{$i = 1,\dots, f$}
                    \State Provide online arrival of $v_t$ with edges $E_t$ to execution $i$ of \Cref{alg:adaptive-edge-coloring}.
                    \State $E_t \gets E_t \setminus \{\text{edges colored by the above execution}\}$. 
                \EndFor
                
                \State Color remaining set $E_t$ greedily using fixed palette which is distinct from copies of \Cref{alg:adaptive-edge-coloring}.
    		\EndFor
    	\end{algorithmic}	
    \end{algorithm}

     In the following we define the number of applications of \Cref{alg:adaptive-edge-coloring} we use and upper bound the number of colors used and degrees of the resulting subgraphs~$G_i$.
     
     We define the following sequence: $\Delta_0 =\Delta$, and for all $i\geq 0$, we have $\Delta_{i+1} = q\Delta_i = q^{i}\Delta$, for $q := e^{-1} + \lambda$ and $\lambda = 3\sqrt{2}({\ln n/\Delta})^{\alpha/10}$.
     We note that, as $\Delta\geq 10^{11}\ln n$, we have that $\lambda\leq 3\sqrt{2} \cdot 10^{-1} < 1/2$. 
     So, $q=e^{-1}+\lambda <1$, and the sequence $\Delta_i$ strictly decreases as $i$ increases.     
     We run \Cref{alg:adaptive-edge-coloring} on subgraphs $G_i$ for  $i=0,1,\dots,f-1$, where $f := \min \{i \mid \Delta_i \geq \Delta^{\alpha}\ln^{1-\alpha} n\}$, and then run greedy on $G_{f+1}$.

     First, we argue that w.h.p.~against an adaptive adversary, all $\Delta_i$ are upper bounds on the maximum degrees $\Delta(G_i)$ of previously defined subgraphs $G_i$.
     First, by definition of $f$, for all $i\leq f$ we have $\Delta_i \geq \Delta^{\alpha}\ln^{1-\alpha} n = (\Delta/\ln n)^{\alpha}\ln n \geq 32\ln n$, by the hypothesis $\Delta\geq 10^{11}\cdot \ln n$. 
     Next, let $\varepsilon_i := 2\sqrt[5]{\ln n/ \Delta_i}$ be the value of $\eps$ used by \Cref{alg:adaptive-edge-coloring} when running it on $G_i$ with parameter $\Delta_i$. For all $i<f$ we have that 
     $\eps_i \leq 2\sqrt[5]{{\ln n/(\Delta^{\alpha}\ln^{1-\alpha} n)}} = 2({\ln n/\Delta})^{\alpha/5},$ 
     and so $3\sqrt{\eps_i}\leq 3\sqrt{2}({\ln n/\Delta})^{\alpha/10} = \lambda$.
     Let $A_i$ be the event that $\Delta(G_i) \leq \Delta_i$. 
     By \Cref{thm:adaptive-subroutine}, we have that, even against an adaptive adversary,
     \begin{align*}
        \Pr[\overline{A_{i+1}} \mid A_{i-1},A_{i-2},\dots,A_0] & = 
        \Pr[\overline{A_{i+1}} \mid A_{i-1}] \\
        & = \Pr[\Delta(G_{i+1}) > \Delta_{i+1} \mid \Delta(G_{i}) \leq \Delta_i] \\
        & \leq \Pr[\Delta(G_{i+1}) > \Delta_{i}\cdot q  \mid \Delta(G_{i}) \leq \Delta_i] \\
        & \leq \Pr[\Delta(G_{i+1}) > \Delta_{i}\cdot (e^{-1}+3\sqrt{\eps_i}) \mid \Delta(G_{i}) \leq \Delta_i] \\
        & \leq n^{-6}.
     \end{align*}
     Now, let $B_i=\overline{A_i}\wedge\bigwedge_{j<i}A_j$ be the indicator for $i$ being the first index for which $\overline{A_i}$ holds (i.e., for which the subgraph $G_i$ has maximum degree $\Delta(G_i)>\Delta_i$). Then, we have by union bound and the above that
     \begin{align*}
         \Pr\left[\bigvee_i \overline{A_i} \right] = \Pr\left[\bigvee_i B_i \right] \leq \sum_i \Pr\left[B_i \right] \leq f\cdot n^{-6} \leq n^{-5},
     \end{align*}
    where the last inequality follows from the sequence of $\Delta_i$ being strictly decreasing as $i$ increases, and so $f\leq \Delta\leq n$. 
    We conclude that with high probability, all $i$ satisfy $\Delta_i\leq q^i\Delta$, and so in particular
    $\Delta_{f+1}<\Delta^{\alpha}\ln^{1-\alpha}$.
    Consequently, the greedy algorithm uses only $2\Delta^{\alpha}\ln^{1-\alpha} n$ colors when run on $G_{f+1}$. It remains to bound the (deterministic) number of colors used by our successive invocations of \Cref{alg:adaptive-edge-coloring}.
    This number is given by 
     \begin{equation*}
     \sum_{i=0}^{f} \Delta_i (1 + \sqrt{\varepsilon_i}) 
     \leq
     \sum_{i=0}^f q^{i}\Delta (1 + \lambda)
     \le \Delta(1+\lambda)\frac{1}{1-q}
     \end{equation*}
     where the last inequality is a geometric sum  (using that $q<1$).
     Now note that by our choice of $q=e^{-1}+\lambda$, and since $\lambda<1/2$ by the earlier discussion, we have that
     \begin{equation*}
     \frac{1+\lambda}{1-q}
     =
     \frac{1+\lambda}{1-e^{-1}-\lambda}
     =
     \frac{(1+\lambda)e}{e-1-e\lambda}
     = \frac{e}{e-1} + \frac{e(2e-1)\lambda}{(e-1)(e-1-e\lambda)}\leq \frac{e}{e-1} + 20\lambda.
     \end{equation*}

    To conclude, the total number of colors used (both from the successive applications of \Cref{alg:adaptive-edge-coloring} and the final greedy) is, by our choice of $\alpha=\frac{10}{11}$ (satisfying $1-\alpha=\alpha/10$), we have that
    \begin{align*}
    \frac{e}{e-1}\cdot \Delta
    + 20\lambda\Delta + 2\Delta^{\alpha} \ln^{1-\alpha} n 
    & \le \frac{e}{e-1}\cdot \Delta
    + 60\sqrt{2} \Delta^{1-\alpha/10}\ln^{\alpha/10}n + 2\Delta^{\alpha} \ln^{1-\alpha} n 
    \\
    & = \left(\frac{e}{e-1} + 100\sqrt[11]{\ln n/ \Delta} \right)\cdot \Delta,
    \end{align*}
    which ends the proof.
\end{proof}

\paragraph{Acknowledgements.} David Wajc would like to thank Neel Patel for useful discussions regarding contention resolution schemes, and thank Yuval Filmus for suggestions on presentation.

\bibliographystyle{alpha}
\bibliography{abb,ultimate,bibliography}

\end{document}